\newtheorem{theorem}{Theorem}[section]
\newtheorem*{theorem*}{Theorem}
\newtheorem{lemma}{Lemma}[section]
\newtheorem{definition}[theorem]{Definition}
\newtheorem{corollary}[theorem]{Corollary}
\newtheorem*{corollary*}{Corollary}
\begin{document}

\preprint{APS/123-QED}

\title{Role of subgraphs in epidemics over finite-size networks under the scaled SIS process}

\author{June Zhang}%
\email{junez@andrew.cmu.edu}
\author{Jos\'{e} M.F.~Moura}
\altaffiliation[]{Was a visiting professor with New York University and the Center for Urban Science and Policy (CUSP) in 2013-2014}

\affiliation{%
Carnegie Mellon University\\
Electrical and Computer Engineering Dept.\\
Pittsburgh, PA, USA
}%

\date{\today}

\begin{abstract}

In previous work, we developed the scaled SIS process, which models the dynamics of SIS epidemics over networks. With the scaled SIS process, we can consider networks that are finite-sized and of arbitrary topology (i.e., we are not restricted to specific classes of networks). We derived for the scaled SIS process a closed-form expression for the time-asymptotic probability distribution of the states of all the agents in the network. This closed-form solution of the equilibrium distribution explicitly exhibits the underlying network topology through its adjacency matrix. This paper determines which network configuration is the most probable. We prove that, for a range of epidemics parameters, this combinatorial problem leads to a submodular optimization problem, which is \emph{exactly} solvable in polynomial time. We relate the most-probable configuration to the network structure, in particular, to the existence of high density subgraphs. Depending on the epidemics parameters, subset of agents may be more likely to be infected than others; these more-vulnerable agents form subgraphs that are denser than the overall network. We illustrate our results with a 193 node social network and the 4941 node Western US power grid under different epidemics parameters.

\end{abstract}

\pacs{Valid PACS appear here}
\maketitle

\section{Introduction}
A network is a graph; it is a collection of nodes connected by edges. Networks have been used in science and engineering to represent systems of multiple interconnected, interdependent components. As a result, the network structure has a large impact on the behavior of the system. Quantifying how network structure impacts network function, that is, the behavior of dynamical processes on networks, is a difficult problem since the system components do not behave independently. 

In this paper, we focus on analyzing the behavior of network diffusion processes such as epidemics. Analytical results for epidemics on networks have been obtained under particular conditions: full mixing models (i.e., the underlying network is a complete graph); infinite-sized networks models using mean-field approximation; or for scaled-free networks \cite{newman2010networks, csermely2013structure, pastor2002epidemic, de2010stochastic}. These approaches approximate the underlying network topology with mathematically simpler structures, because accounting for the exact graph topology is a combinatorial problem that is difficult to analyze and computationally expensive to compute. We showed in previous work \cite{JZhangJournal, JZhang2} that ,for a specific network diffusion process, which we called the \emph{scaled SIS} (Susceptible-Infected-Susceptible) process, it is possible to characterize its time-asymptotic behavior on any arbitrary, finite-sized network with $N$ agents. 

The scaled SIS process is Markov. It accounts for 1) exogenous (i.e., spontaneous) infection at rate $\lambda$; 2) endogenous (i.e., neighbor-to-neighbor) infection at rate $\gamma$; and 3) healing at rate $\mu$. The time-asymptotic behavior of the process is described by its equilibrium distribution, which is a PMF (probability mass function) over all $2^N$ possible network configurations. Our approach preserves the full microscopic states of all the agents in contrast to previous approaches that only provide results for aggregate or macroscopic states (e.g., fraction of infected agents) \cite{1238052}. However, retaining the exact network configuration means that the computational complexity of solving for the equilibrium distribution, an eigenvector problem, scales exponentially with the size of the network, $N$.

We have shown that, under specific assumptions on the form of the endogenous infection, the scaled SIS process is a \emph{reversible} Markov process for which we can find its equilibrium distribution in closed form, avoiding solving a large eigenvalue/eigenvector problem. Further, the equilibrium distribution that we derived exhibits explicitly the underlying network structure through the network adjacency matrix. The equilibrium distribution is parameterized by two parameters: $\left(\frac{\lambda}{\mu}, \gamma \right)$, where as usual, parameter $\frac{\lambda}{\mu}$ controls the exogenous, or the topology-independent behavior of the scaled SIS process, whereas parameter $\gamma$ controls the endogenous or the topology-dependent behavior of the process. 

We used the equilibrium distribution to address the question of which of the $2^N$ possible configurations in a network is the most likely to occur in the long run. We refer to this as the most-probable configuration, which is found by maximizing the equilibrium distribution. This optimization (called the Most-Probable Configuration Problem) is difficult because: 1) it is combinatorial; 2) it depends on the healing/infection parameters of the scaled SIS process; and 3) it depends on the underlying network topology. Previously in \cite{JZhangJournal}, we partitioned the space of $\left(\frac{\lambda}{\mu}, \gamma \right)$ values into four regimes and were able to find the most-probable configuration in Regime II) \textbf{Endogenous Infection Dominant}, for which $0 < \frac{\lambda}{\mu} \leq 1, \gamma > 1$, for only specific types of networks: $k$-regular, complete multipartite, and complete multipartite with $k$-regular islands. We showed for these specific networks that the most-probable configuration solution space exhibits phase transition behavior depending on the network structure and epidemics parameters. 

This paper considers the Most-Probable Configuration Problem for arbitrary networks. We are able to prove that this leads to the optimization of a submodular function for which we have a polynomial time solution. Further, we show which clusters of agents in the network are more vulnerable to epidemics than others. These are relevant questions in applications. For example, these are the clusters to focus on in marketing campaigns or when combating epidemics. 

We review the scaled SIS process in Section~\ref{sec:model} and set up the Most-Probable Configuration Problem in Section~\ref{sec:xstar}. In Section~\ref{sec:submodular}, we show that, in Regime II), the Most-Probable Configuration Problem can be transformed into an equivalent submodular problem, and that it is possible to solve for its \emph{exact} solution in \emph{polynomial time}. We apply this to solve the most-probable configuration for two example networks: the 193 node acquaintance network of drug users in Hartford, CT \cite{weeks2002social}, and the 4941 node network of the Western US power grid \cite{watts1998collective}. Section~\ref{sec:netstruct} shows how the solution space of the Most-Probable Configuration Problem in Regime II) relates to the density of subgraphs in the network. Section~\ref{sec:conclusion} concludes the paper.


\section{Scaled SIS Process}\label{sec:model}
Consider a population of $N$ agents whose interconnections are represented by a static, simple, unweighted, undirected, connected graph, $G(V,E)$, where $V(G)$ is the set of vertices and $E(G)$ is the set of edges. For background on graphs see \cite{algraph}. The topology of $G$ is captured by the symmetric $N \times N$ adjacency matrix, $A$. The state of the $i$\textsuperscript{th} agent is denoted by $x_i$. Agents can be in one of two states: susceptible ($x_i =0$) or infected ($x_i = 1$); susceptible agents are vulnerable to infections since there is no immunization in the system.

Let
\[
\mathbf{x} = [x_1, x_2, \ldots, x_{N}]^T.
\]
We will refer to $x_i$ as the \emph{agent state} and $\mathbf{x}$ as either the \emph{network state} or the \emph{network configuration}. The configuration state space is $\mathcal{X} = \{\mathbf{x}\}$, with cardinality $\left| \mathcal{X} \right | = 2^N$.

The scaled SIS process models the evolution of the network state, $\mathbf{x}$, over time according to the stochastic microscopic interaction rules from the SIS (susceptible-infected-susceptible) epidemics. The SIS framework assumes that infected agents can heal and become reinfected so it does not account for immunization \cite{newman2010networks}. Let $X(t) = \mathbf{x}$ be the state of the network at time $t, \, t \geq 0$. Under appropriate assumptions, $X(t)$ is a continuous-time Markov process \cite{Draief:2006:TVS:1190095.1190160, Ganesh, 1238052}. The scaled SIS process accounts for 1) exogenous infection (i.e., susceptibles spontaneously develop infection); 2) endogenous infection (i.e., susceptibles become infected due to infection from infective neighbors); and 3) healing events. These processes are independent. At time $t$, only one one agent is affected. By including both exogenous infection and healing, the scaled SIS process does \emph{not} have an absorbing state at equilibrium. 

The scaled SIS process is Markov; each network state is a state of the Markov process. We define two operators on the network state, $\mathbf{x} = [x_1, x_2, \ldots x_i,\ldots x_j, \ldots, x_{N}]^T$. We use the following notation:
\begin{eqnarray*}
H_i \mathbf{x} =  [x_1, x_2, \ldots, x_i = 1, \ldots,  x_{N}]^T \\
H_{j \bullet }\mathbf{x} =  [x_1, x_2, \ldots, x_j = 0, \ldots,  x_{N}]^T.
\end{eqnarray*}
The operator $H_i$ defines the operation that agent $i$ becomes infected. If agent $i$ is already infected, the operator does nothing. The operator $H_{j \bullet }$ defines the operation that agent $j$ is healed. If agent $j$ is already uninfected, the operator does nothing.

The time the process spends in a particular state is random and exponentially distributed, with the following transition rates corresponding to infection and healing events, respectively:

\begin{enumerate}
\item $X(t)$ jumps to the network state where the $i$th agent, which was healthy, becomes infected with transition rate
\begin{equation}\label{eq:qnTk}
q(\mathbf{x}, H_i\mathbf{x}) = \lambda \gamma^{d_i}, \quad \mathbf{x} \neq H_i\mathbf{x},
\end{equation}
where $d_i = {\sum_{j=1}^{N} \mathds{1}(x_j = 1) A_{ij}}$, is the number of infected neighbors of node $i$. The symbol $\mathds{1}(\cdot)$ is the indicator function, and $A=\left[A_{ij}\right]$ is the adjacency matrix of the arbitrary network $G$ that captures the interactions among the agents. There are two components to the infection rate. If the $i$th agent has no infected neighbors, $d_i = 0$, and the transition rate reduces to $\lambda > 0$. We interpret $\lambda$ as the exogenous infection rate, the rate a susceptible agent spontaneously becomes infected; it is the same for all the agents in the network. If the $i$th agent has $d_i$ infected neighbors, the infective rate is $\lambda\gamma^{d_i}$; it is the product of $\lambda$ and the endogenous infection rate, $\gamma >0$, \emph{scaled} by $d_i$, the number of infected neighbors of agent $i$. Because  of this factor, the infective rate depends on the network topology. 

\item $X(t)$ jumps to the network state where the $j$th agent, which was infected, heals with transition rate:
\begin{equation}\label{eq:qnTj}
q(\mathbf{x}, H_{j \bullet }\mathbf{x}) = \mu, \quad \mathbf{x} \neq H_{j \bullet }\mathbf{x}.
\end{equation}
The healing rate, $\mu > 0$, is the same for all the agents in the system. 
\end{enumerate}

\subsection{Equilibrium Distribution}
The evolution of the scaled SIS process is captured by the rate (infinitesimal) matrix $\mathbf{Q}$ of the Markov process $X(t)$. The assumption that the underlying network $G$ is connected assures that the Markov process is irreducible. Therefore, the equilibrium distribution, $\pi(\mathbf{x})$, exists and is given by the left eigenvector corresponding to the 0 eigenvalue of $\mathbf{Q}$, the rate matrix \cite{norris1998markov}. The problem in determining the equilibrium distribution $\pi(\mathbf{x})$ is that its computation is prohibitively expensive for meaningful sized networks since $\mathbf{Q}$ is a $2^N \times 2^N$ matrix. This has limited the analysis of epidemics and spreading processes on networks to either: 1) full mixing models (e.g., where every agent comes in contact with every other agents \textemdash the network is a complete graph); 2) to small scale simulations, where $N$ is small so that $O((2^N)^3)$ operations are feasible; or 3) to mean field type approximations of special network configurations.

We proved in \cite{JZhang2}, see also \cite{JZhangJournal}, that the scaled SIS process is a \emph{reversible} Markov process by showing that its equilibrium distribution satisfies not only the global balance equation but also the detailed balance equation \cite{Kelly}. For reversible Markov processes, the equilibrium distribution is unique. We derived the equilibrium distribution of the scaled SIS process to be:

\begin{equation} \label{eq:equilibriumdistribution}
\pi({\bf x}) =\frac{1}{Z}\left( \frac{\lambda}{\mu}\right)^{1^T{\bf x}}  \gamma^{\frac{{\bf x}^TA{\bf x}}{2}  }, \quad  \mathbf{x}, \in \mathcal{X}
\end{equation}
where $Z$ is the partition function,
\begin{equation} \label{eq:partitionfunction}
Z=  \sum_{{\bf x} \in \mathcal{X}}  \left( \frac{\lambda}{\mu}\right)^{1^T{\bf x}}  \gamma^{\frac{{\bf x}^TA{\bf x}}{2}}.   
\end{equation}

Previous epidemics model call the ratio $\frac{\lambda}{\mu}$, the \emph{effective infection rate} \cite{PhysRevE.90.012810}. The equilibrium distribution, $\pi(\mathbf{x})$, factors as the product of three terms: 1) the normalization by the partition function; 2) the term $\left( \frac{\lambda}{\mu}\right)^{1^T{\bf x}}$ that is topology independent since the exogenous infection rate $\lambda$ and the healing rate $\mu$ are identical for all the agents in the network, and the total number of infected agents, $1^T{\bf x}$, does not depend on the topology; and 3) the $\gamma^{\frac{{\bf x}^TA{\bf x}}{2}}$ that explicitly accounts for the exact network through its adjacency matrix~$A$. It is topology dependent since the endogenous infection rate $\gamma$ is scaled by the number of infected neighbors; the number of edges where both end nodes are infected (we call them \emph{infected edges}), $\frac{{\bf x}^TA{\bf x}}{2}$, explicitly depends on the adjacency matrix of the underlying network.

\subsection{Parameter Regimes}
The scaled SIS Process can model different types of network diffusion processes depending on the values of the rate parameters; in particular, if the effective exogenous infection rate, $\frac{\lambda}{\mu}$, and the endogenous infection rate, $\gamma$, are between 0 and 1, or if they are greater than 1. In \cite{JZhangJournal}, we identified 4 regimes. 

When both parameters are either between 0 and 1 or greater than 1, then the most-probable configuration is either the $\mathbf{x}^0 = [0,0 \ldots, 0]^T$ configuration or the $\mathbf{x}^N = [1,1 \ldots, 1]^T$ configuration. Reference \cite{JZhangJournal} also investigated Regime III) where $\frac{\lambda}{\mu} >1, 0 < \gamma \leq 1$. This regime models the counter-intuitive behavior where an increasing number of infected agents delays additional infection in the network. In this paper, we focus our analysis on Regime II) \textbf{Endogenous Infection Dominant:} $0 < \frac{\lambda}{\mu} \leq 1, \gamma> 1$. Regime II best models epidemics and similar types of spreading processes. 

The effective exogenous infection rate, $\frac{\lambda}{\mu}$, indicates the preference of individual agents. With $0 < \frac{\lambda}{\mu} \leq 1$, the healing rate is larger than the exogenous infection rate; agents prefer the healthy state to the infected state. With $\gamma >1$, however, additional infected neighbors increase the rate at which the healthy agent becomes infected; thereby the network helps to spread the infection. As a result, the network topology is crucial to determine the behavior of the scaled SIS process at equilibrium.

In the next section, we introduce the Most-Probable Configuration Problem, which solves for the network configurations with maximum equilibrium probability. Because there is \emph{competition} between the topology independent term and the topology dependent term, the most-probable configuration exhibits complex phase transition behavior depending on the effective exogenous infection rate $\frac{\lambda}{\mu}$, the endogenous infection rate $\gamma$, and the underlying network topology.


\section{Most-Probable Configuration Problem}\label{sec:xstar}

In the previous section, we showed that, for the scaled SIS process, we are able to derive its equilibrium distribution, $\pi(\mathbf{x})$, analytically, see equation~\eqref{eq:equilibriumdistribution}. The equilibrium distribution describes the long-run behavior of the network epidemics. While the partition function~\eqref{eq:partitionfunction} renders the exact calculation of the equilibrium distribution infeasible for meaningful size networks, knowing the equilibrium distribution expression allows us to quickly compare between network configurations, addressing, for example questions like which of the two is more probable. Of all the possible $2^N$ network configurations, one is of particular interest, namely, the configuration of infected and healthy agents that has a higher chance of occurring in the long run. This is the configuration $\mathbf{x}^*$ that maximizes $\pi(\mathbf{x})$. Formally, $\mathbf{x}^*$ maximizes the equilibrium probability: 
\begin{equation}\label{eq:xstar}
\mathbf{x}^* = \arg \max_{\mathbf{x} \in \mathcal{X}} \pi(\mathbf{x}) =  \arg \max_{\mathbf{x} \in \mathcal{X}}\left( \frac{\lambda}{\mu}\right)^{1^T{\bf x}}  \gamma^{\frac{{\bf x}^TA{\bf x}}{2}  }.  
\end{equation}

We call this the Most-Probable Configuration Problem and $\mathbf{x}^*$ the \emph{most-probable configuration}. The Most-Probable Configuration Problem is a combinatorial optimization problem as agents can only be in one of two states; its solution is dependent on the effective exogenous infection rate, $\frac{\lambda}{\mu}$, the endogenous infection rate $\gamma$, and the underlying network topology, captured by the adjacency matrix, $A$.

Previously in \cite{JZhangJournal}, we provided analytical results for the Most-Probable Configuration Problem in Regime II) \textbf{Endogenous Infection Dominant:} $0 < \frac{\lambda}{\mu} \leq 1, \gamma> 1$ for particular networks, namely, structured network topologies such as $k$-regular, complete multipartite, complete multipartite with $k$-regular islands. We observed a phase transition behavior. Below a threshold condition that depends on the parameters $\left(\frac{\lambda}{\mu}, \gamma \right)$ and on the network topology, the most-probable configuration is $\mathbf{x}^0 = [0, 0, \ldots, 0]$, the configuration where all agents are susceptibles. Above the threshold condition, the most-probable configuration is $\mathbf{x}^N = [1, 1, \ldots, 1]$, the configuration where all agents are infected. 

This paper extends the analysis of the Most-Probable Configuration Problem in Regime II) to \emph{arbitrary} network topologies. We will show that, for arbitrary networks, the most-probable configuration may be configurations other than $\mathbf{x}^0$ and $\mathbf{x}^N$. We call these solutions to the Most-Probable Configuration Problem \emph{non-degenerate configurations}. These solutions are useful for identifying agents and communities that are more vulnerable to the epidemics. We will relate these communities to the structure of the networks in detail later. Figure~\ref{fig:KPP} and Figure~\ref{fig:Power} show the most-probable configurations obtained by the method of Section~\ref{sec:submodular} for two example networks: a 193-node acquaintance network \cite{weeks2002social} and the 4941-node power grid \cite{watts1998collective}. These are non-degenerate configurations where only a subset of agents are infected.

In Section~\ref{sec:submodular}, we prove that we can solve \emph{exactly} for the most-probable configuration in Regime II) in polynomial time using submodular optimization. Then, in Section~\ref{sec:netstruct}, we discuss the relationship between the most-probable configuration and the network topology, in particular, the relation between non-degenerate configurations and network topology.


\section{Submodularity and the Most-Probable Configuration}\label{sec:submodular}

In this section, we solve the Most-Probable Configuration Problem in Regime II in polynomial time by showing that the problem can be transformed into a submodular function. First, we review the definition of submodular functions.

\subsection{Submodular Function} 

The Most-Probable Configuration Problem is the maximization of a pseudo-Boolean function. Pseudo-Boolean functions are functions that map $N$ binary variables to a real number \cite{billionnet1985maximizing}. Minimization of general pseudo-Boolean functions is NP-hard \cite{boros2002pseudo}. Gr{\"o}tschel, Lov{\'a}sz, and Schrijver, \cite{grotschel1981ellipsoid}, proved that the minimization of a pseudo-Boolean function that is submodular can be done in polynomial time. If the function is supermodular, its maximization is in polynomial time. 

A pseudo-Boolean function, $f: \{0,1\}^N \to \mathcal{R}$, is also a set function $g: \mathcal{P}(V) \to \mathcal{R}$ where $\mathcal{P}(V)$ is the power set of $V = \{1,2,\ldots, N \}$. There are many equivalent definitions of submodularity \cite{lovasz1983submodular}. The one we use in this paper is the following:

\begin{definition}[\cite{billionnet1985maximizing}]
A set function, $g: \mathcal{P}(V) \to \mathcal{R}$, is submodular if and only if for any $\alpha_1 \subseteq V, \alpha_2 \subseteq \alpha_1, i \in V \setminus \alpha_1$:
\[
g(\alpha_1 \cup \{i\}) - g(\alpha_1) \leq g(\alpha_2 \cup \{i\}) - g(\alpha_2).
\]
\end{definition}

For a submodular function, the incremental gain of adding an element to the set $\alpha_1$ is less than or equal to the gain of adding the element to a smaller subset of $\alpha_1$. A supermodular function has the inequality in the opposite direction.

\subsection{Most-Probable Configuration: A Submodular Problem}

The Most-Probable Configuration Problem \eqref{eq:xstar} seeks the maximum of a pseudo-Boolean function that maps a 0-1 vector, the network configuration $\mathbf{x}$, to a scalar. The network configuration $\mathbf{x} \in \{0,1\}^N$ is the characteristic vector or characteristic function of the set of infected agents: $\alpha_{\mathbf{x} }= \{ i \mid i \in V, x_i = 1\}$. Let $h(\alpha_{\mathbf{x} })$ be the set of infected edges (i.e., edges where both end nodes are infected) in configuration $\mathbf{x}$: $h(\alpha_{\mathbf{x} }) = \{\{i,j\} \mid i,j \in V, x_i = 1, x_j = 1\}$.

The number of infected agents in configuration $\mathbf{x}$ is $\left\vert{\alpha_{\mathbf{x} }}\right\vert = 1^T\mathbf{x}$. The number of infected edges is $\left\vert h(\alpha_{\mathbf{x} })\right\vert = \frac{\mathbf{x}^TA\mathbf{x}}{2}$. The Most-Probable Configuration Problem is then to solve for the maximum argument of 
\begin{equation}\label{eq:xstarset}
g(\alpha_{\mathbf{x}}) = \left( \frac{\lambda}{\mu}\right)^{\left\vert{\alpha_{\mathbf{x} }}\right\vert}  \gamma^{\left\vert h(\alpha_{\mathbf{x} })\right\vert}.
\end{equation}

We will prove in Theorem~\ref{thm:submodular} that $-\log(g(\alpha_{\mathbf{x}}))$ is a submodular function. Therefore, we can solve for its minimum argument in polynomial time. Lemma~\ref{lemma:submodular} sets up some basic conditions that makes proving Theorem~\ref{thm:submodular} easier.

\begin{lemma}\label{lemma:submodular}
Consider two sets of infected agents, $\alpha_1, \alpha_2 \subseteq V$ and $ i \in V \setminus \alpha_1$. The cardinalities of $\alpha_1$ and $\alpha_2$ are $\left\vert{\alpha_1 }\right\vert = n_1$ and 
$\left\vert{\alpha_2 }\right\vert = n_2$, respectively; then $\left\vert{\alpha_1 \cup \{i\}}\right\vert = n_1 + 1$, and $\left\vert{\alpha_2 \cup \{i\}}\right\vert = n_2 + 1$. The numbers of infected edges induced by $\alpha_1$ and $\alpha_2$ are $\left\vert h(\alpha_1)  \right\vert = e_1$ and $\left\vert h(\alpha_2)  \right\vert = e_2$, respectively. Let $\left\vert h(\alpha_1 \cup \{i\}) \right\vert = e_1 + m_1$ and $\left\vert h(\alpha_2 \cup \{i\}) \right\vert = e_2 + m_2$; therefore $m_1$ is the number of additional infected edges created with the inclusion of agent $i$ in $\alpha_1$ and $m_2$ is the number of additional infected edges created with the inclusion of agent $i$ in $\alpha_2$. Let $\alpha_2 \subseteq \alpha_1$. Then:

\begin{enumerate}
\item $n_1 \ge n_2$.
\item $e_1 \geq e_2$.
\item $m_1 \geq  m_2$.
\end{enumerate}
\end{lemma}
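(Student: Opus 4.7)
The plan is to prove the three statements in order, each as an immediate consequence of a set-containment observation. Throughout, I will use the fact that the induced-edge set can be written explicitly as $h(\alpha) = \{\{u,v\} \in E \mid u,v \in \alpha\}$, and that the new edges created when an outside vertex $i$ joins $\alpha$ are exactly the edges from $i$ to its neighbors that already lie in $\alpha$.

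For (1), I would simply note that $\alpha_2 \subseteq \alpha_1$ immediately gives $n_2 = |\alpha_2| \le |\alpha_1| = n_1$, so there is nothing more to do.

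For (2), I would show $h(\alpha_2) \subseteq h(\alpha_1)$ as sets: any edge $\{u,v\}$ with $u,v \in \alpha_2$ automatically satisfies $u,v \in \alpha_1$ by the containment hypothesis, hence lies in $h(\alpha_1)$. Taking cardinalities gives $e_2 \le e_1$.

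For (3), I would first rewrite $m_k$ in terms of the neighborhood $N(i) = \{ j \in V : \{i,j\} \in E\}$. Because $i \notin \alpha_1$ (hence $i \notin \alpha_2$), the edges that appear in $h(\alpha_k \cup \{i\})$ but not in $h(\alpha_k)$ are precisely those of the form $\{i,j\}$ with $j \in N(i) \cap \alpha_k$, so $m_k = |N(i) \cap \alpha_k|$. Since $\alpha_2 \subseteq \alpha_1$, intersecting both sides with $N(i)$ yields $N(i) \cap \alpha_2 \subseteq N(i) \cap \alpha_1$, and taking cardinalities gives $m_2 \le m_1$.

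None of the three steps presents a real obstacle; the only mild subtlety is making sure that in step (3) the condition $i \in V \setminus \alpha_1$ is used to rule out any self-loop / double-counting issue so that the marginal-edge count equals exactly $|N(i) \cap \alpha_k|$. The lemma is essentially a set-theoretic monotonicity observation, and its role in the paper is to make the later submodularity/supermodularity computations in Theorem~\ref{thm:submodular} amount to a clean comparison between the scalar quantities $m_1$ and $m_2$ against the sign of $\log(\lambda/\mu)$ and $\log\gamma$ in Regime~II.
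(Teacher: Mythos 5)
Your proof is correct and follows essentially the same route as the paper's: part (1) is immediate from set containment, part (2) from $h(\alpha_2)\subseteq h(\alpha_1)$, and part (3) from the observation that the new edges are exactly those joining $i$ to its neighbors already in the set. Your explicit identification $m_k = \left\vert N(i)\cap\alpha_k\right\vert$ is a slightly cleaner formalization of the paper's verbal argument, but the underlying idea is identical.
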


\begin{proof}\label{proof:lemmasubmodular}
\begin{enumerate}
\item When $\alpha_2 \subset \alpha_1$, $\alpha_2$ must have strictly fewer number of infected agents than $\alpha_1$. When $\alpha_2 = \alpha_1$, then they contain the same number of infected agents. Hence, $n_1 \ge n_2$.

\item When $\alpha_2 \subset \alpha_1$, infected agents in $\alpha_2$ can not induce more infected edges than the number of infected edges induced by the infected agents in $\alpha_1$. When $\alpha_2 = \alpha_1$, then the infected agents in $\alpha_1$ and $\alpha_2$ will induce the same number of infected edges. Hence, $e_1 \geq e_2$.

\item Every infected agent in $\alpha_2$ is an infected agent in $\alpha_1$. Every new infected edge connecting the infected agent $j \in \alpha_2$ with $i$ is also a new infected edge in $\alpha_1 \cup \{i\}$. However, some edge may also have $j \in \alpha_1$. Hence, $m_1 \ge m_2$.

\end{enumerate}
\end{proof}

\begin{theorem}\label{thm:submodular}
Let $g(\alpha_{\mathbf{x}})$ be the set function given in \eqref{eq:xstarset}. If $\lambda >0, \mu> 0$ and $\gamma \geq 1$, then $-\log(g(\alpha_{\mathbf{x}}) )$ is a submodular function, where
\[
-\log(g(\alpha_{\mathbf{x}}) ) = -\left\vert{\alpha_{\mathbf{x} }}\right\vert \log\left(\frac{\lambda}{\mu} \right) - \left\vert h(\alpha_{\mathbf{x} })\right\vert \log(\gamma).
\]
\end{theorem}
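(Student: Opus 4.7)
The plan is to apply the definition of submodularity directly to the function
\[
-\log g(\alpha_{\mathbf{x}}) = -|\alpha_{\mathbf{x}}| \log\!\left(\tfrac{\lambda}{\mu}\right) - |h(\alpha_{\mathbf{x}})| \log\gamma,
\]
using the bookkeeping already set up in Lemma~\ref{lemma:submodular}. Fix $\alpha_2 \subseteq \alpha_1 \subseteq V$ and $i \in V \setminus \alpha_1$, and adopt the $n_k, e_k, m_k$ notation of that lemma. Because $-\log g$ is a linear combination of $|\alpha_{\mathbf{x}}|$ and $|h(\alpha_{\mathbf{x}})|$, the marginal change upon adjoining $i$ decomposes cleanly as
\[
-\log g(\alpha_k \cup \{i\}) + \log g(\alpha_k) \;=\; -\log\!\left(\tfrac{\lambda}{\mu}\right) - m_k \log\gamma, \qquad k=1,2.
\]
The submodularity inequality therefore collapses to the single comparison $m_1 \log\gamma \geq m_2 \log\gamma$.

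The central observation is that the $\log(\lambda/\mu)$ contribution cancels between the two sides: adjoining a vertex always increments $|\alpha_{\mathbf{x}}|$ by exactly one, independent of the surrounding set, so the topology-independent piece $-|\alpha_{\mathbf{x}}|\log(\lambda/\mu)$ is modular and contributes nothing to submodularity. All the work is carried by the edge-count term. Part~(3) of Lemma~\ref{lemma:submodular} already gives $m_1 \geq m_2$, and the hypothesis $\gamma \geq 1$ gives $\log\gamma \geq 0$, so $m_1 \log\gamma \geq m_2 \log\gamma$ and the inequality follows.

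The only real subtlety is why one must pass to $-\log g$ rather than $\log g$: in Regime II one has $\log(\lambda/\mu) \leq 0$ and $\log \gamma \geq 0$, so $\log g$ is neither everywhere positive nor everywhere negative, but these signs do not enter the submodularity condition, which is a purely algebraic statement about marginal differences. Passing to the negative logarithm is the right choice because maximizing $g$ is equivalent to minimizing $-\log g$, and the Grötschel--Lovász--Schrijver polynomial-time algorithm applies to the minimization of submodular functions; equivalently, the same argument shows $\log g$ is supermodular, and maximization of supermodular functions is likewise polynomial-time. No step in the argument is hard; the content lies entirely in the (already proven) monotonicity $m_1 \geq m_2$ of the induced-edge marginal, together with the sign hypothesis on $\log\gamma$.
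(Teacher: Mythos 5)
Your proof is correct and follows essentially the same route as the paper: expand the marginal differences using the $n_k, e_k, m_k$ bookkeeping of Lemma~\ref{lemma:submodular}, observe that the $\log(\lambda/\mu)$ terms cancel, and reduce the submodularity inequality to $m_1\log\gamma \ge m_2\log\gamma$, which follows from $m_1 \ge m_2$ and $\gamma \ge 1$. The added remarks on the modularity of the cardinality term and the choice of $-\log g$ versus $\log g$ are accurate but do not change the substance of the argument.
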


\begin{proof}\label{proof:theoremsubmodular}
To prove submodularity of $-\log(g(\alpha_{\mathbf{x}}) )$, we need to show that
\begin{align}\label{eq:toprove}
-\log(g(\alpha_1 \cup \{i\})) + \log(g(\alpha_1)) \leq -\log(g(\alpha_2 \cup \{i\})) + \log(g(\alpha_2)), 
\end{align}
for any $\alpha_1 \subseteq V, \alpha_2 \subseteq \alpha_1, i \in V \setminus \alpha_1$.

The left-hand side (LHS) of \eqref{eq:toprove} is
\begin{align}
-(n_1+1)\log\left(\frac{\lambda}{\mu}\right) - (e_1+m_1)\log(\gamma) + n_1\log\left(\frac{\lambda}{\mu}\right) + e_1\log(\gamma),
\end{align}
which reduces to 
\begin{align}
-\log\left(\frac{\lambda}{\mu}\right) - m_1\log(\gamma). 
\end{align}

The right-hand side (RHS) of \eqref{eq:toprove} is
\begin{align}
-(n_2+1)\log\left(\frac{\lambda}{\mu}\right) - (e_2+m_2)\log(\gamma) + n_2\log\left(\frac{\lambda}{\mu}\right) + e_2\log(\gamma),
\end{align}
which reduces to
\begin{align}
-\log\left(\frac{\lambda}{\mu}\right) - m_2\log(\gamma). 
\end{align}

Expression \eqref{eq:toprove} reduces to 
\[
-\log\left(\frac{\lambda}{\mu}\right) - m_1\log(\gamma) \leq -\log\left(\frac{\lambda}{\mu}\right) - m_2\log(\gamma). 
\]
Since $\gamma \ge 1$, we know that $\log(\gamma) \ge 0$ and that $m_1 \geq m_2$ by Lemma \ref{lemma:submodular}. Therefore, the LHS of \eqref{eq:toprove} is less than or equal to the RHS of \eqref{eq:toprove} for any $\alpha_1 \subseteq V, \alpha_2 \subseteq \alpha_1, i \in V \setminus \alpha_1$. By definition, $-\log(g(\alpha_{\mathbf{x}}) )$ is a submodular function.
\end{proof}

Theorem~\ref{thm:submodular} proves that $-\log(g(\alpha_{\mathbf{x}}))$ is submodular if $\lambda >0, \mu> 0$, and $\gamma \geq 1$; this means that $\log(g(\alpha_{\mathbf{x}}) )$ is supermodular under the same condition. Since the logarithm function is a monotonic function, the maximum argument of $\log(g(\alpha_{\mathbf{x}}) )$ is also the maximum argument of $g(\alpha_{\mathbf{x}})$, which is the solution to the Most-Probable Configuration Problem. As Regime II) \textbf{ Endogenous Infection Dominant}: $0 < \frac{\lambda}{\mu} \le 1, \gamma >1$ satisfies the condition that $\gamma \ge 1$, using submodular optimization, we can find the \emph{exact} most-probable configuration of the scaled SIS process in Regime II) for \emph{arbitrary} network topology in polynomial time.

\subsection{Social Networks and the Power Grid}
The most-probable configuration allows us to identify the set of agents that are vulnerable to network epidemics since it retains the state of all the agents. Agents who are infected in the most-probable configuration are more vulnerable to the epidemics than agents who remain healthy. Because the most-probable configuration is derived from a dynamical model of network diffusion processes, the set of vulnerable agents depends on the infection and healing rates, $\lambda, \gamma, \mu$. 

As we showed in \cite{JZhangJournal}, the most-probable configuration changes depending on these parameters. When the healing rate is high, $\mathbf{x}^* = \mathbf{x}^0$, meaning that the epidemics is not severe. When the infection rate is high, $\mathbf{x}^* = \mathbf{x}^N$, the epidemics is severe, and all the agents are vulnerable. When $\mathbf{x}^*$ is a non-degenerate configuration (i.e, $\mathbf{x}^* \neq \mathbf{x}^0, \mathbf{x}^N$), this indicates that sets of agents in the network are more vulnerable than others to the epidemics. We illustrate this by solving for the most-probable configuration using \cite{krause2010sfo} under different $\left(\frac{\lambda}{\mu}, \gamma \right)$ parameters for 2 realistic networks: a social network \cite{weeks2002social} and the Western United States power grid \cite{watts1998collective}, obtained from \cite{snapnets}

The network shown in Fig.~\ref{fig:KPP} is a 193 node, 273 edge social network of drug users in Hartford, CT. The network was determined through interviews. Reference \cite{borgatti2003key} looked for influential agents in the network by considering it as a graph connectivity problem. However, they did not consider a dynamical model of influence. Assuming that we can model drug habits as an epidemics (i.e., there is a social contagion aspect to the behavior), we applied the scaled SIS process to this network and solved for the most-probable configuration under different parameters to find influential network structures.

We show the resultant most-probable configurations in Fig.~\ref{fig:KPPlau1_mu5_g2.4}, Fig.~\ref{fig:KPPlau4_mu15_g3}, Fig.~\ref{fig:KPPlau2_mu5_g1.2}, Fig.~\ref{fig:KPPlau1_mu2_g1.6} as we change $\left(\frac{\lambda}{\mu}, \gamma \right)$. We can see from these results that there is a small community of users who are infected when others are healthy. The size of this community increases or decreases depending on the parameters. If there is a social contagion component to drug usage, then these agents may be more vulnerable to the social contagion component of drug usage and therefore more likely to persist in their habit. In the next section, we will relate the most-probable configuration to the network substructure.

The network shown in Fig.~\ref{fig:Power} is the 4941 node, 6595 edge power grid network of the Western United States used by Watts and Strogatz. They showed through simulation of the SIR (susceptible-infected-removed) epidemics model on the western power grid that small-world networks like the western power grid are more conducive to spreading infection/failures than lattice networks. This is useful for explaining why failures propagate so quickly in a blackout. However, they did not identify \emph{which} components in the power grid are more vulnerable to the epidemics.

Figure~\ref{fig:Powerlau1_mu3_g2} and Fig.~\ref{fig:Powerlau1_mu3_g2.6} show the most-probable configuration for the western US power grid when for the scaled SIS process parameterized $\left(\frac{\lambda}{\mu} = 0.33, \gamma = 2 \right)$ and $\left(\frac{\lambda}{\mu} = 0.33, \gamma = 2.6 \right)$, respectively. We can see that for the same $\frac{\lambda}{\mu}$, as $\gamma$ increases, thereby increasing the infectiousness of epidemics, the number of infected agents increases. This is intuitive since, for large $\gamma$, the epidemics is severe, and the most-probable configuration is driven toward $\mathbf{x}^N$, the configuration where all the agents are infected. Moreover, the most-probable configurations are both non-degenerate configurations. The agents who are infected at equilibrium are more vulnerable to the network epidemics than agents who are healthy. By using submodular optimization, we can identify these more vulnerable agents, by solving for the most-probable configuration out of $2^{4941}$ total possible configurations, exactly and in polynomial time.

An important question is to relate the most-probable configuration to network structure. We will show in the next section that the most-probable configuration is related to subgraph density by rewriting the equilibrium distribution \eqref{eq:equilibriumdistribution} in terms of induced subgraphs instead of network configurations.

\begin{figure}[htpb]
        \centering
        \begin{subfigure}[b]{0.48\textwidth}
                \centering
                \includegraphics[width=\textwidth]{./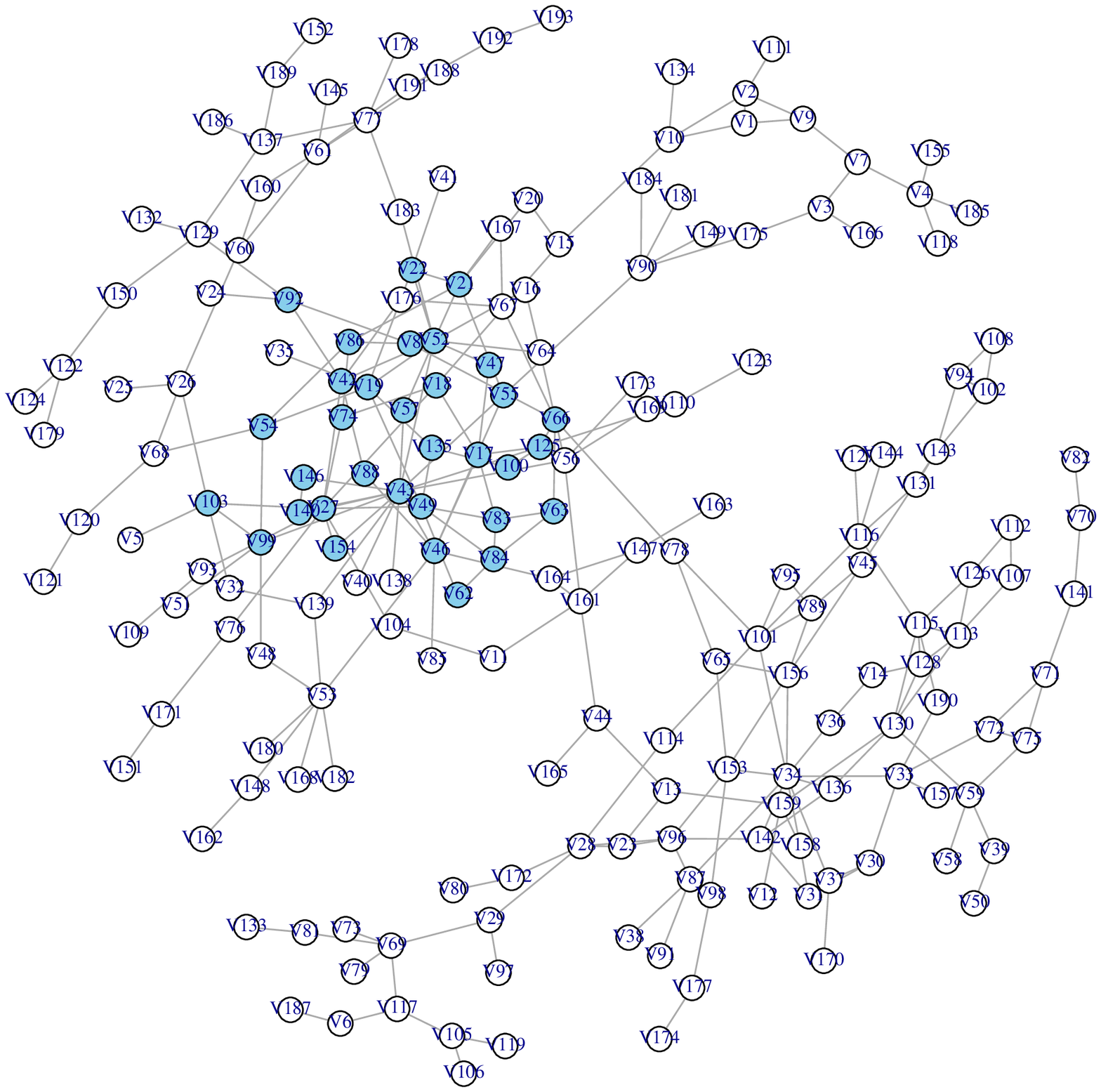}
                \caption{$\frac{\lambda}{\mu} = 0.2, \gamma = 2.4$}
                \label{fig:KPPlau1_mu5_g2.4}
        \end{subfigure}
        ~ 
        \begin{subfigure}[b]{0.48\textwidth}
                \centering
                \includegraphics[width=\textwidth]{./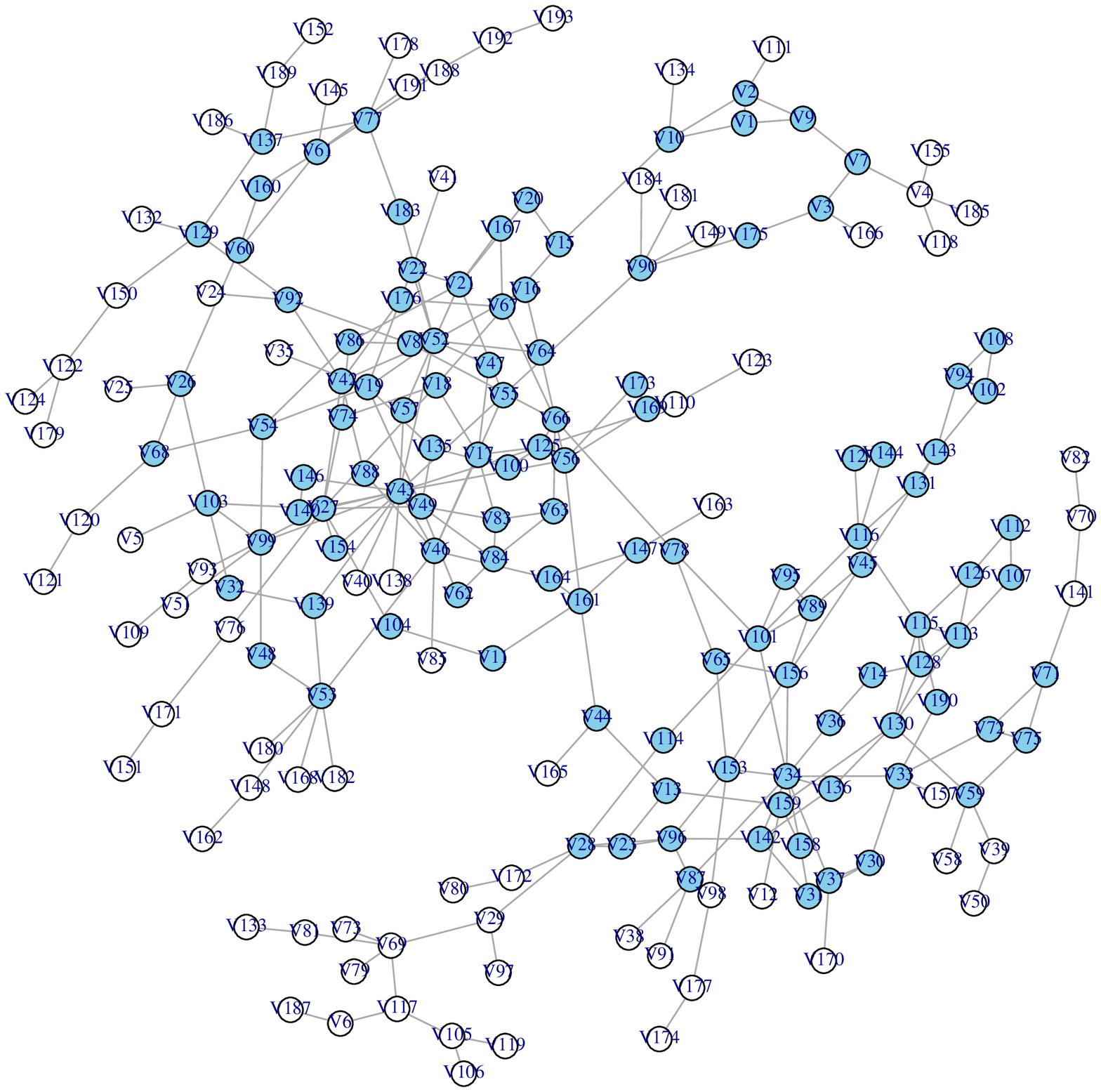}
                \caption{$\frac{\lambda}{\mu} = 0.267, \gamma = 3$}
                \label{fig:KPPlau4_mu15_g3}
        \end{subfigure}
        ~ 
          
        \begin{subfigure}[b]{0.48\textwidth}
                \centering
                \includegraphics[width=\textwidth]{./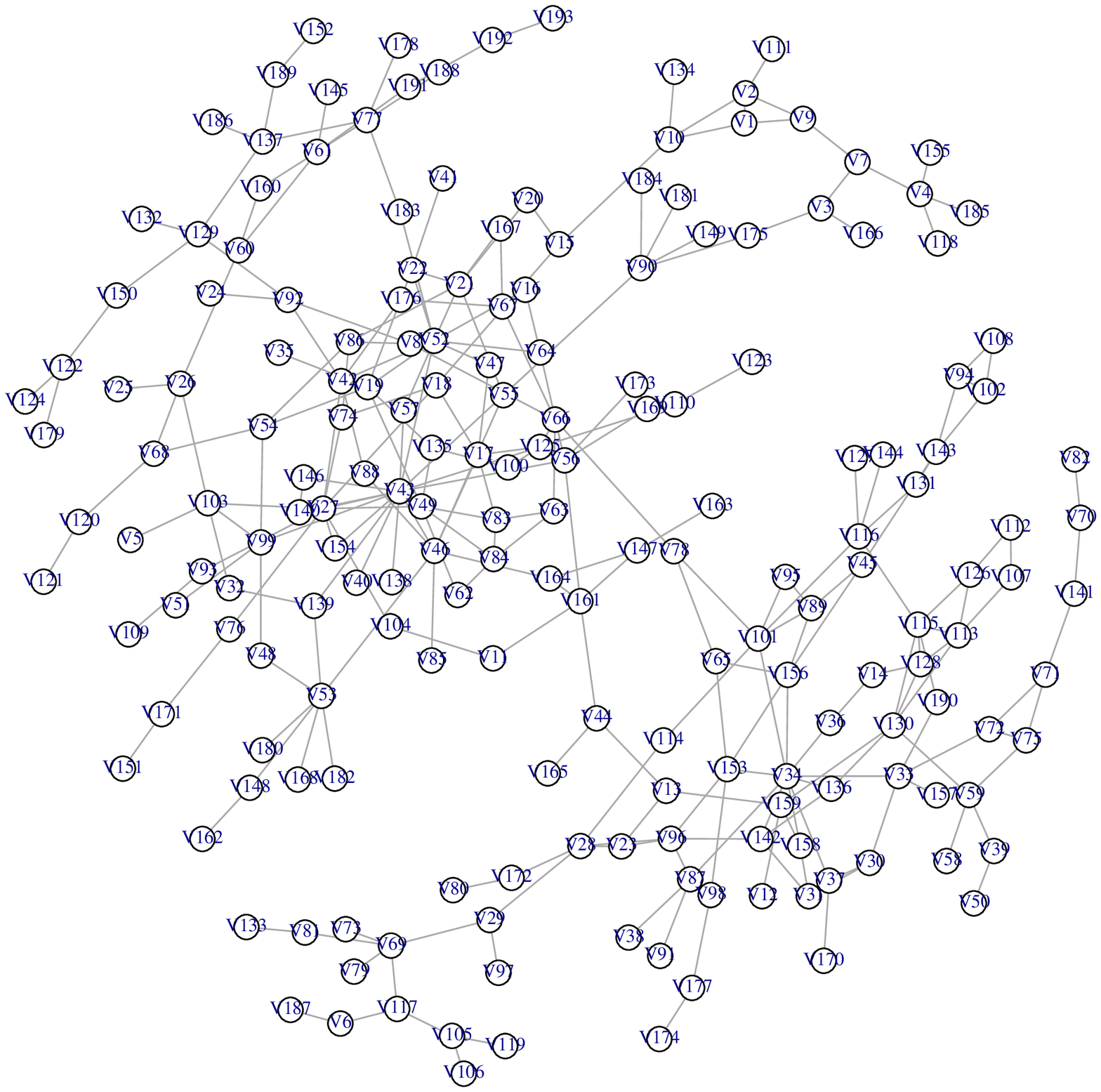}
                \caption{$\frac{\lambda}{\mu} = 0.4, \gamma = 1.2$}
                \label{fig:KPPlau2_mu5_g1.2}
        \end{subfigure}
            ~ 
        \begin{subfigure}[b]{0.48\textwidth}
                \centering
                \includegraphics[width=\textwidth]{./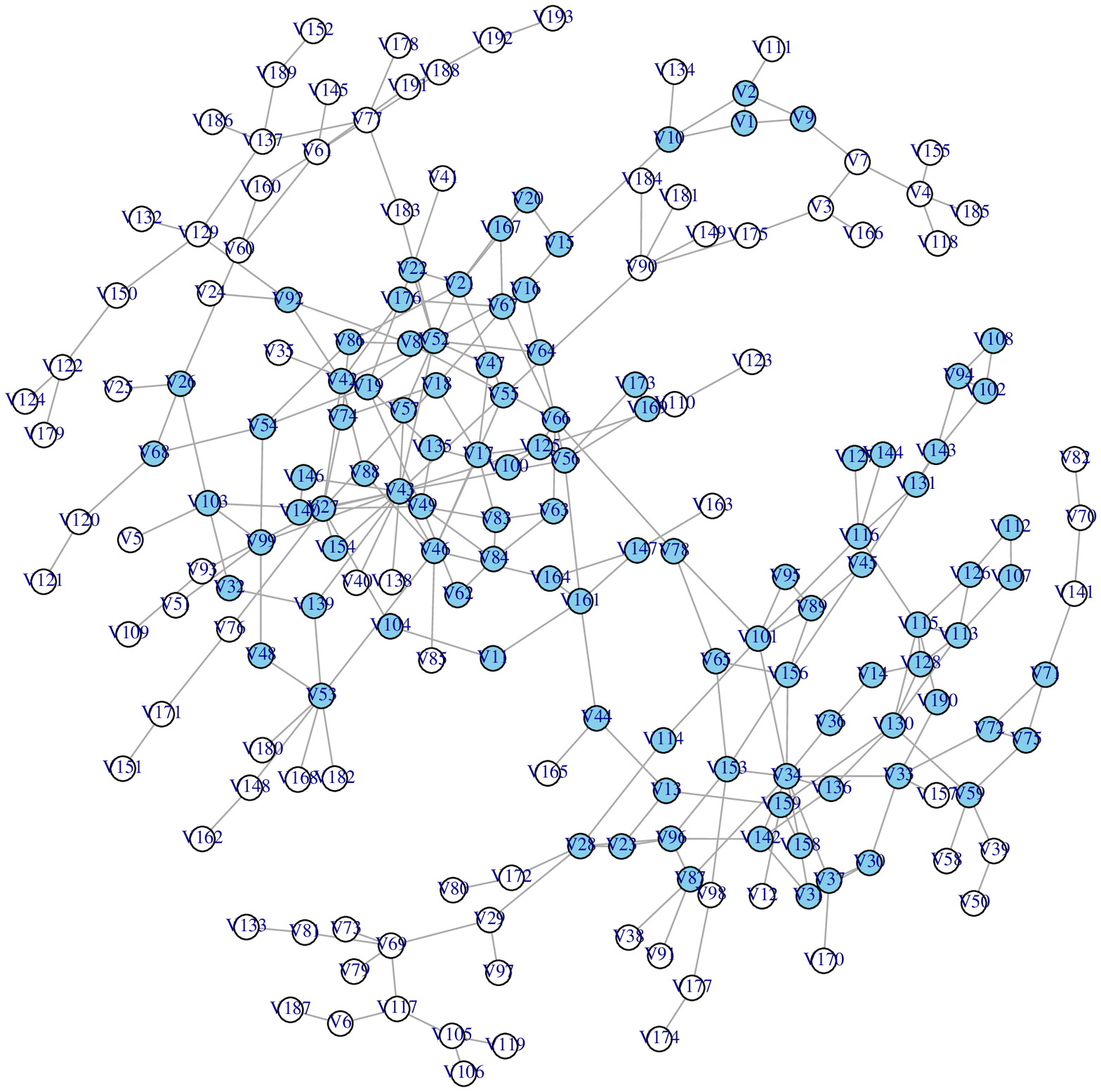}
                \caption{$\frac{\lambda}{\mu} = 0.5, \gamma = 1.6$}
                \label{fig:KPPlau1_mu2_g1.6}
        \end{subfigure}
        \caption{(Color online) Most-Probable Configuration $\mathbf{x}^*$ under Different $\left(\frac{\lambda}{\mu}, \gamma\right)$ Parameters (Blue/Grey = Infected, White = Healthy).}\label{fig:KPP}
\end{figure}

\begin{figure}[htpb]
        \centering
          \begin{subfigure}[b]{0.6\textwidth}
                \centering
                \includegraphics[width=\textwidth]{./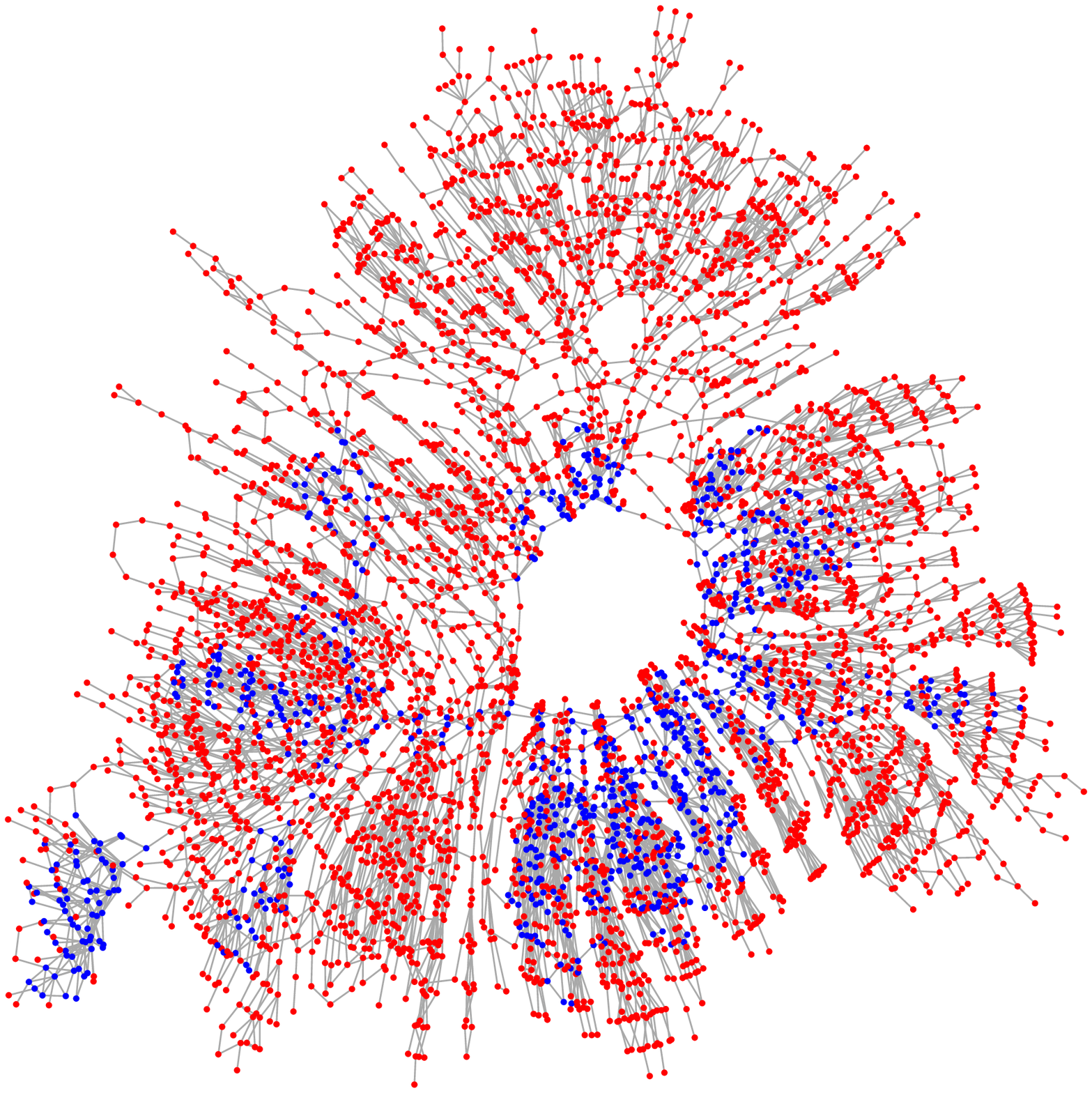}
                \caption{$\frac{\lambda}{\mu} = 0.33, \gamma = 2$}
                \label{fig:Powerlau1_mu3_g2}
        \end{subfigure}
        ~ 
          
        \begin{subfigure}[b]{0.6\textwidth}
                \centering
                \includegraphics[width=\textwidth]{./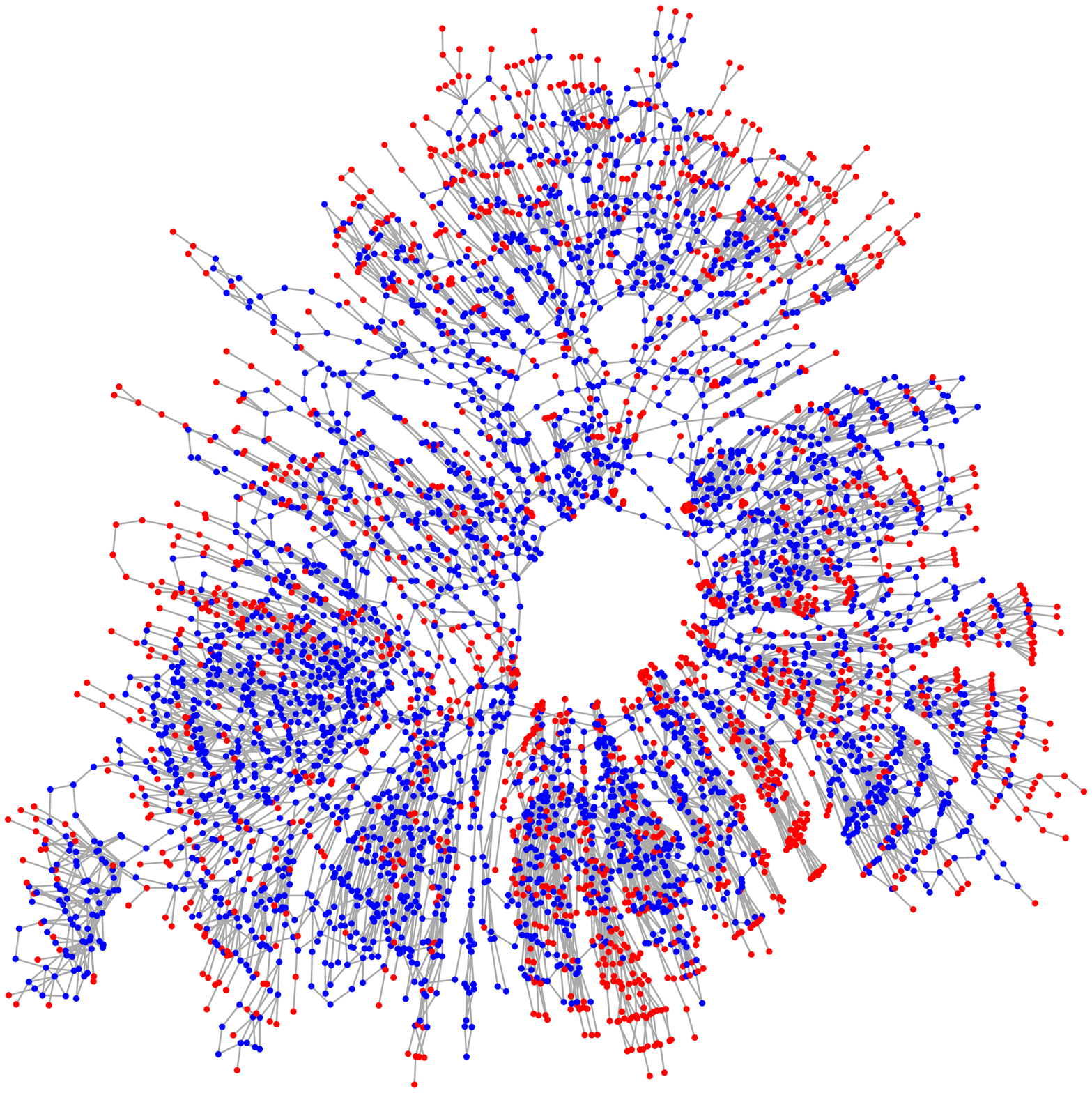}
                \caption{$\frac{\lambda}{\mu} = 0.33, \gamma = 2.6$}
                \label{fig:Powerlau1_mu3_g2.6}
        \end{subfigure}%

        \caption{(Color online) Most-Probable Configuration $\mathbf{x}^*$ under Different $\left(\frac{\lambda}{\mu}, \gamma\right)$ Parameters (Blue/Black = Infected, Red = Healthy).}\label{fig:Power}
\end{figure}


\section{Most-Probable Configuration and Network Structure}\label{sec:netstruct}

In the previous section, we showed that we can exactly solve for the most-probable configuration with a polynomial time algorithm. The exact solution, however, does not give insight on how the most-probable configuration changes depending on the parameters $\left(\frac{\lambda}{\mu}, \gamma \right)$ and on the network topology. In this section, we draw the connection between the most-probable configuration and subgraphs in the network. As per our intuition for epidemics, densely connected network structures are more vulnerable to network epidemics; the scaled SIS process quantifies this intuition. First, we will define the graph theoretic terms used in this section.

\subsection{Induced Subgraphs and Graph Density}

\begin{definition}[From \cite{algraph}]
The graph $H$ is an induced subgraph of $G$ if two vertices in $H$ are connected if and only if they are connected in $G$  and the vertex set and edge set of $H$ are subsets of the vertex set and edge set of $G$.
\[
V(H) \subseteq V(G), E(H) \subseteq V(G)
\]
\end{definition}

\begin{definition}
The graph $H(\mathbf{x})$ is an induced subgraph of configuration $\mathbf{x} = [x_1, x_2, \ldots, x_N]^T$ if the nodes/edges in the subgraph are the infected agents/edges in $\mathbf{x}$.
\begin{align}\label{eq:subgraphset}
&V(H(\mathbf{x})) = \{v_i \in V(G) \mid x_i = 1 \}\\
&E(H(\mathbf{x})) = \{(i,j) \in E(G) \mid x_i = 1, x_j = 1\}
\end{align}
\end{definition}

By definition, $\left | V(H(\mathbf{x})) \right |= 1^T\mathbf{x}$ and $\left | E(H(\mathbf{x})) \right| = \frac{{\bf x}^TA{\bf x}}{2}$. Figure \ref{fig:configation1andsubgraph} and Fig. \ref{fig:configation2andsubgraph} show two network configurations and their corresponding induced subgraphs. We proved in \cite{JZhang3} that configurations whose induced subgraphs are isomorphic are equally probable. Unless we need to refer explicitly to the underlying network configuration $\mathbf{x}$, for notational simplicity, we will write $H$ to denote an induced subgraph instead of writing $H(\mathbf{x})$.

\begin{figure}[htpb]
\center
\includegraphics[width = 2in]{./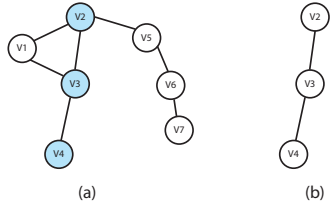}
\caption{(a) Configuration $\mathbf{x}_1 = [0,1,1,1,0,0,0]^T$, (b) Induced Subgraph $H(\mathbf{x}_1)= H_1$}.
\label{fig:configation1andsubgraph}
\end{figure}

\begin{figure}[htpb]
\center
\includegraphics[width = 2in]{./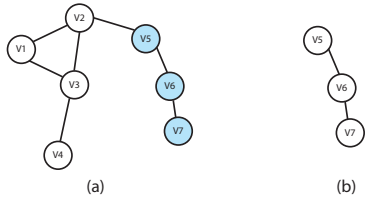}
\caption{(a) Configuration $\mathbf{x}_2 = [0,0,0,0,1,1,1]^T$, (b) Induced Subgraph $H(\mathbf{x}_2) = H_2$}.
\label{fig:configation2andsubgraph}
\end{figure}

\begin{definition}
The set of all possible induced subgraphs of $G$ is $\mathcal{H} = \{H(\mathbf{x})\}, \quad \forall \, \mathbf{x} \in \mathcal{X}$.
\end{definition}

The set $\mathcal{H}$ includes the empty graph, which is induced by the configuration $\mathbf{x}^0= [0,0,\ldots,0]^T$, and $G$, which is the subgraph induced by the configuration $\mathbf{x}^N = [1,1,\ldots,1]^T$. 

\begin{definition}[From \cite{Khuller:2009vj}]
The density of the graph $G$ is 
\[
d(G) = \frac{\left | E(G) \right |}{ \left | V(G) \right |}.
\]
\end{definition}
There is an alternative definition for graph density that is the number of edges divided by the total number of possible edges \cite{wasserman1994social}. Unfortunately, these two definitions of density are not equivalent.

We will refer to the density of the entire network, $d(G) = d(H(\mathbf{x}^N))$, as the \emph{network density}, and the density of an induced subgraph of $G$ as the \emph{subgraph density}. The density of the empty graph, $d(H(\mathbf{x}^0))$, is $0$ by definition. The subgraphs in $\mathcal{H}$ can be partially ordered by their density. There may be many subgraphs with the same density. A special induced subgraph in $\mathcal{H}$ is the densest subgraph. 

\begin{definition}\label{def:densest}
Let $\overline{H}$ be the densest subgraph in $G$. Then
\[
d(\overline{H}) \geq d(H), \quad \forall H \in \mathcal{H}.
\]
\end{definition}
Finding $\overline{H}$ is known as the \emph{Densest Subgraph Problem}. It is known that this problem can be solved in polynomial time exactly and in linear time in approximation for undirected graphs \cite{Khuller:2009vj}.

\subsection{Equilibrium Distribution of the Scaled SIS Process}

Since there is a one-to-one relationship between the network configuration $\mathbf{x}$ and its induced subgraph $H(\mathbf{x})$, we can rewrite the equilibrium distribution \eqref{eq:equilibriumdistribution} of the scaled SIS process in terms of the induced subgraph density and the size of the induced subgraph:
\begin{align}\label{eq:equilibriumdistribution2}
\pi(H) =  \frac{1}{Z}\left(  \left( \frac{\lambda}{\mu}\right)  \gamma^{d(H)}\right)^{\mid V(H) \mid}, \quad H \in \mathcal{H},
\end{align}
where $d(H)$ is the density of the subgraph and $Z$ is the partition function.

The Most-Probable Configuration Problem \eqref{eq:xstar} is then also an optimization problem over all the possible induced subgraphs in $G$:
\begin{align}
H(\mathbf{x}^*) = \arg \max_{H \in \mathcal{H}} \left(  \left( \frac{\lambda}{\mu}\right)  \gamma^{d(H)}\right)^{\mid V(H) \mid}.
\end{align}
The subgraph induced by the most-probable configuration, $H(\mathbf{x}^*)$, is the \emph{most-probable subgraph}, but this is \emph{not} necessarily the same subgraph as the densest subgraph, $\overline{H}$.

Stating the equilibrium distribution in terms of the induced subgraph will allow us to derive several theorems regarding the most-probable configuration. For the theorems that follow, we make the following assumptions:
\begin{description}
\item[Assumption 1.] The scaled SIS process operates in Regime II) \textbf{Endogenous Infection Dominant}. This limits the effective infection and the endogenous infection to the range, $0 < \frac{\lambda}{\mu}\leq 1$ and $\gamma > 1$.
\item[Assumption 2.] The underlying network $G$ is a simple, undirected, unweighted, and connected graph.
\end{description}


\subsection{Most-Probable Configuration and Subgraphs}

\begin{theorem}\label{theorem2}[Proof in Appendix \ref{prooftheorem2}]
The most-probable configuration $\mathbf{x}^* \neq \mathbf{x}^0$ if and only if there exists at least one induced subgraph $H \in \mathcal{H}$ with density $d(H)$ for which $\lambda\gamma^{d(H)} > \mu$.
\end{theorem}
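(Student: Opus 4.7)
The plan is to work directly from the rewritten equilibrium distribution \eqref{eq:equilibriumdistribution2} and compare $\pi(H)$ to $\pi(\mathbf{x}^0)$. Since the empty configuration corresponds to the empty induced subgraph with $|V(H)| = 0$, we get $\pi(\mathbf{x}^0) = \frac{1}{Z}$. Thus $\mathbf{x}^* \neq \mathbf{x}^0$ is equivalent to saying that some configuration $\mathbf{x}$ has $\pi(\mathbf{x}) > \pi(\mathbf{x}^0) = \frac{1}{Z}$, i.e., there is some induced subgraph $H \in \mathcal{H}$ with
\[
\left(\left(\frac{\lambda}{\mu}\right)\gamma^{d(H)}\right)^{|V(H)|} > 1.
\]
The whole theorem reduces to showing that this inequality is equivalent to $\lambda\gamma^{d(H)} > \mu$ for some $H \in \mathcal{H}$.

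For the forward direction ($\Rightarrow$), assume $\mathbf{x}^* \neq \mathbf{x}^0$. Pick $H = H(\mathbf{x}^*)$; since $\mathbf{x}^* \neq \mathbf{x}^0$, we have $|V(H)| \geq 1$, so the strict inequality above forces the base to exceed $1$, i.e.\ $\lambda\gamma^{d(H)} > \mu$. For the reverse direction ($\Leftarrow$), assume there exists $H \in \mathcal{H}$ with $\lambda\gamma^{d(H)} > \mu$. Under Assumption 1 we have $\lambda \leq \mu$, so this inequality fails for the empty subgraph (where $d(H) = 0$ and the left side is just $\lambda$); hence any witnessing $H$ must satisfy $|V(H)| \geq 1$. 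Then the base $\left(\frac{\lambda}{\mu}\right)\gamma^{d(H)}$ exceeds $1$, raising it to a positive integer power preserves this, and therefore $\pi(H) > \pi(\mathbf{x}^0)$, which rules out $\mathbf{x}^0$ as the argmax.

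There is essentially no hard step: the argument is bookkeeping about the reformulation \eqref{eq:equilibriumdistribution2} together with a base-greater-than-one observation. The only subtlety worth flagging explicitly is the edge case of the empty subgraph in the $(\Leftarrow)$ direction, which is why Assumption 1 (guaranteeing $\lambda/\mu \leq 1$) is used to guarantee that any $H$ meeting the density condition is nonempty, so that raising to the power $|V(H)|$ strictly amplifies the inequality rather than collapsing it to $1$.
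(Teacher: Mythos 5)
Your proof is correct and follows essentially the same route as the paper's: both compare $\pi(H)$ against $\pi(\mathbf{x}^0)=\frac{1}{Z}$ via the reformulation \eqref{eq:equilibriumdistribution2} and observe that the exponent $|V(H)|$ is a positive integer whenever $H$ is nonempty. Your explicit handling of the empty-subgraph edge case in the $(\Leftarrow)$ direction, using $\frac{\lambda}{\mu}\leq 1$ from Assumption~1 to show any witnessing $H$ must be nonempty, is a small point the paper sidesteps by simply choosing $\mathbf{x}'\in\mathcal{X}\setminus\mathbf{x}^0$, and it slightly tightens the argument.
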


\begin{theorem}\label{theorem3}[Proof in Appendix \ref{prooftheorem3}]

Case 1: The densest subgraph, $\overline{H}$, is the network $G$. Then, $\mathbf{x}^* \neq \mathbf{x}^N$ if and only if $\frac{\lambda}{\mu}\gamma^{d(G)} \leq 1$.

Case 2: The densest subgraph, $\overline{H}$, is not the network $G$. Then, $\mathbf{x}^*\neq\mathbf{x}^N$ if and only if there exists at least one induced subgraph $H \in \mathcal{H} \setminus G$ with density $d(H) = \frac{E'}{N'}$ for which
\begin{align}
\frac{\log(\frac{\lambda}{\mu}\gamma^{d(G)})}{\log(\frac{\lambda}{\mu}\gamma^{d(H)} )} < \frac{N'}{N}.
\end{align}

\end{theorem}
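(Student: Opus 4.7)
The plan is to recast everything in the induced-subgraph language of equation~\eqref{eq:equilibriumdistribution2}: for any configuration $\mathbf{x}$ the induced subgraph $H=H(\mathbf{x})$ on $N'=|V(H)|$ vertices satisfies $\pi(H)\propto\left(\frac{\lambda}{\mu}\gamma^{d(H)}\right)^{N'}$, so $\pi(G)\propto\left(\frac{\lambda}{\mu}\gamma^{d(G)}\right)^{N}$ and $\pi(\emptyset)\propto 1$. The statement $\mathbf{x}^*\neq\mathbf{x}^N$ is equivalent to the existence of some $H\in\mathcal{H}\setminus\{G\}$ with $\pi(H)\geq\pi(G)$, so the theorem reduces to asking when some other induced subgraph matches or beats the whole graph $G$.

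For Case~1, the assumption $\overline{H}=G$ forces $d(H)\leq d(G)$ for every $H\in\mathcal{H}$ and $|V(H)|<N$ for every $H\neq G$. The ``if'' direction is immediate: when $\frac{\lambda}{\mu}\gamma^{d(G)}\leq 1$ we have $\pi(G)\leq 1=\pi(\emptyset)$, so $\mathbf{x}^0$ is at least as probable as $\mathbf{x}^N$. For the contrapositive, I assume $\frac{\lambda}{\mu}\gamma^{d(G)}>1$ and compare $\pi(H)$ with $\pi(G)$ for arbitrary $H\neq G$ by splitting on whether $\frac{\lambda}{\mu}\gamma^{d(H)}$ is at most or greater than $1$: in the first case $\pi(H)\leq 1<\pi(G)$, and in the second case $1<\frac{\lambda}{\mu}\gamma^{d(H)}\leq\frac{\lambda}{\mu}\gamma^{d(G)}$ together with $|V(H)|<N$ gives $\pi(H)<\pi(G)$ by monotonicity in both base and exponent.

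For Case~2, the hypothesis $\overline{H}\neq G$ guarantees a subgraph with density strictly greater than $d(G)$, so there is always a denser candidate available for beating $G$. The core algebraic step translates $\pi(H)\geq\pi(G)$ into
\[
N'\log\!\left(\frac{\lambda}{\mu}\gamma^{d(H)}\right)\ \geq\ N\log\!\left(\frac{\lambda}{\mu}\gamma^{d(G)}\right),
\]
and dividing by $\log\!\left(\frac{\lambda}{\mu}\gamma^{d(H)}\right)$, treated as positive in the subregime where $\frac{\lambda}{\mu}\gamma^{d(H)}>1$, rearranges to the ratio inequality in the theorem. The ``only if'' direction then selects a witness $H$ (the densest subgraph $\overline{H}$ being the natural choice) for which $\pi(H)\geq\pi(G)$ and verifies the ratio condition; the ``if'' direction runs the same algebra backward to recover $\pi(H)\geq\pi(G)$.

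The main obstacle lies in the sign bookkeeping of Case~2: because $\log\!\left(\frac{\lambda}{\mu}\gamma^{d(H)}\right)$ can have either sign in Regime~II, dividing by it may flip the direction of the inequality, so the ratio condition must be interpreted carefully across parameter regimes. The cleanest route is to restrict attention to witnesses $H$ with $\frac{\lambda}{\mu}\gamma^{d(H)}>1$, since these are the subgraphs that can exponentiate to overtake $\pi(G)$, and to exploit that $d(\overline{H})>d(G)$ guarantees such witnesses whenever they are needed. Verifying this reduction, together with matching the strict-versus-weak inequalities between the two cases, is where the bulk of the proof effort will go.
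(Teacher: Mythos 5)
Your overall route is the same as the paper's: pass to the induced-subgraph form \eqref{eq:equilibriumdistribution2}, reduce $\mathbf{x}^*\neq\mathbf{x}^N$ to the existence of some $H\neq G$ with $\pi(H)\geq\pi(G)$, take logarithms, and rearrange $N'\log\bigl(\frac{\lambda}{\mu}\gamma^{d(H)}\bigr)\geq N\log\bigl(\frac{\lambda}{\mu}\gamma^{d(G)}\bigr)$ into the ratio condition. Your Case~1 is in fact \emph{more} complete than the paper's: the paper disposes of both directions by citing Corollary~\ref{corollary2}, which silently requires knowing that no non-degenerate configuration can be optimal when $\overline{H}=G$; your contrapositive argument (splitting on whether $\frac{\lambda}{\mu}\gamma^{d(H)}$ exceeds $1$ and using monotonicity in base and exponent) supplies exactly the missing step.

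The place where your proposal is not yet a proof is Case~2, and you have correctly located the difficulty: dividing by $\log\bigl(\frac{\lambda}{\mu}\gamma^{d(H)}\bigr)$ is only inequality-preserving when that quantity is positive, and in Regime~II it need not be. Be aware that the paper's own appendix performs this division without comment, so you have identified a real soft spot rather than invented one. However, your proposed repair --- restricting to witnesses with $\frac{\lambda}{\mu}\gamma^{d(H)}>1$ --- is sketched, not executed, and it does not close the necessity direction in the subcase $\lambda\gamma^{d(\overline{H})}\leq\mu$: there $\mathbf{x}^*=\mathbf{x}^0\neq\mathbf{x}^N$ by Corollary~\ref{corollary2}, yet \emph{every} candidate denominator $\log\bigl(\frac{\lambda}{\mu}\gamma^{d(H)}\bigr)$ is nonpositive, so no witness of the restricted type exists and the ratio condition must be verified directly with negative logarithms, where the inequality direction flips. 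You flag this as ``where the bulk of the proof effort will go,'' which is honest, but it means the Case~2 necessity argument is still open in your write-up; to finish, you must either treat that subcase separately or argue that the theorem's quantifier over $\mathcal{H}\setminus G$ is implicitly restricted to subgraphs with $\frac{\lambda}{\mu}\gamma^{d(H)}>1$, and then check that sufficiency survives the restriction (it does, since a positive-denominator witness satisfying the ratio condition genuinely yields $\pi(H)>\pi(G)$).
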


\begin{corollary}\label{corollary3}[Proof in Appendix \ref{proofcorollary3}]
Let the density of the network be $d(G) = \frac{E}{N}$. Then, the most-probable configuration is a non-degenerate configuration, $\mathbf{x}^* \in \mathcal{X} \setminus \{\mathbf{x}^0, \mathbf{x}^N\}$, if and only if there exists at least one induced subgraph $H \in \mathcal{H}$ with density $d(H) = \frac{E'}{N'}$ for which $\lambda\gamma^{d(H)} > \mu$, \text{ and}

\[
\frac{\log(\frac{\lambda}{\mu}\gamma^{d(G)})}{\log(\frac{\lambda}{\mu}\gamma^{d(H)} )} < \frac{N'}{N}.
\]

\end{corollary}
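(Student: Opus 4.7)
The plan is to derive this corollary directly from Theorems~\ref{theorem2} and~\ref{theorem3}, which separately characterize $\mathbf{x}^* \neq \mathbf{x}^0$ and $\mathbf{x}^* \neq \mathbf{x}^N$. The nontrivial content of the corollary is that a \emph{single} induced subgraph $H$ can serve as witness for both inequalities simultaneously; in the $(\Rightarrow)$ direction this witness will be $H(\mathbf{x}^*)$ itself, while in the $(\Leftarrow)$ direction the given $H$ is already the right witness, and the work lies in confirming that we are in Case~2 of Theorem~\ref{theorem3}.

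For $(\Leftarrow)$, suppose some $H \in \mathcal{H}$ with $|V(H)| = N'$ and $|E(H)| = E'$ satisfies both stated conditions. The inequality $\lambda \gamma^{d(H)} > \mu$ with Theorem~\ref{theorem2} yields $\mathbf{x}^* \neq \mathbf{x}^0$ immediately. To conclude $\mathbf{x}^* \neq \mathbf{x}^N$, I first rule out Case~1 of Theorem~\ref{theorem3}: if $\overline{H} = G$, then $d(H) \leq d(G)$, and combined with $\frac{\lambda}{\mu}\gamma^{d(H)} > 1$ this forces $\log\!\left(\frac{\lambda}{\mu}\gamma^{d(G)}\right) \geq \log\!\left(\frac{\lambda}{\mu}\gamma^{d(H)}\right) > 0$, so the logarithmic ratio in the corollary would be at least $1$, contradicting the strict bound $N'/N \leq 1$. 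Hence Case~2 applies; the subgraph $H$ itself serves as its witness (note that $H \neq G$, else the ratio degenerates to $1 < 1$), so Theorem~\ref{theorem3} gives $\mathbf{x}^* \neq \mathbf{x}^N$ and therefore $\mathbf{x}^*$ is non-degenerate.

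For $(\Rightarrow)$, suppose $\mathbf{x}^* \in \mathcal{X} \setminus \{\mathbf{x}^0, \mathbf{x}^N\}$ and set $H := H(\mathbf{x}^*)$ with $N' = \mathbf{1}^T\mathbf{x}^*$, so $1 \leq N' \leq N-1$. Since $\mathbf{x}^*$ is optimal but distinct from $\mathbf{x}^0$, $\pi(\mathbf{x}^*) > \pi(\mathbf{x}^0)$, which via \eqref{eq:equilibriumdistribution2} reduces to $(\frac{\lambda}{\mu}\gamma^{d(H)})^{N'} > 1$; since $N' > 0$ this gives $\lambda\gamma^{d(H)} > \mu$. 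Analogously, $\pi(\mathbf{x}^*) > \pi(\mathbf{x}^N)$ rearranges to $N' \log\!\left(\frac{\lambda}{\mu}\gamma^{d(H)}\right) > N \log\!\left(\frac{\lambda}{\mu}\gamma^{d(G)}\right)$, and dividing both sides by the strictly positive quantity $\log\!\left(\frac{\lambda}{\mu}\gamma^{d(H)}\right)$ (positive by the just-established first condition) delivers the logarithmic ratio inequality. The main care point throughout is precisely the sign of $\log\!\left(\frac{\lambda}{\mu}\gamma^{d(H)}\right)$: the first condition is exactly what guarantees it is positive so that the division preserves the inequality direction, which is the structural reason the two conditions must be bundled and witnessed by the same $H$.
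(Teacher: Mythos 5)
Your proof is correct, and it follows the same basic route as the paper---deriving the corollary by combining Theorems~\ref{theorem2} and~\ref{theorem3}---but it is considerably more careful than the paper's own two-sentence argument. The paper simply observes that Theorem~\ref{theorem2} characterizes $\mathbf{x}^* \neq \mathbf{x}^0$ and Theorem~\ref{theorem3} characterizes $\mathbf{x}^* \neq \mathbf{x}^N$ and declares the corollary proved; it never addresses the fact that the corollary asserts a \emph{single} subgraph $H$ witnessing both inequalities (a naive conjunction of the two theorems only yields two possibly different witnesses), nor does it verify that the hypotheses place you in Case~2 rather than Case~1 of Theorem~\ref{theorem3}. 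You handle both points: in the forward direction you take $H = H(\mathbf{x}^*)$ and check directly that this one subgraph witnesses both conditions, with the first condition supplying exactly the positivity of $\log\bigl(\frac{\lambda}{\mu}\gamma^{d(H)}\bigr)$ needed to divide without flipping the inequality; in the reverse direction you rule out Case~1 by showing that $\overline{H} = G$ would force the logarithmic ratio to be at least $1$, contradicting the strict bound by $N'/N \leq 1$, and you also note $H \neq G$ so the Case~2 hypothesis $H \in \mathcal{H} \setminus G$ is met. These are genuine gaps in the published proof that your write-up fills. The only looseness you share with the paper is the implicit assumption that the maximizer is unique, so that $\mathbf{x}^* \neq \mathbf{x}^0$ yields the strict inequality $\pi(\mathbf{x}^*) > \pi(\mathbf{x}^0)$; the paper's proofs of Theorems~\ref{theorem2} and~\ref{theorem3} make the same tacit assumption, so this is consistent with the source rather than an error on your part.
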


In Regime II) individual agents have a preference for being healthy, but the epidemics might spread to other agents through neighbor-to-neighbor contagion. Under the scaled SIS process, the subgraph density $d(H)$ scales the exogenous infection rate $\gamma$, thereby affecting the overall infection rate. Theorem~\ref{theorem2} states that, if the network contains \emph{dense-enough} subgraphs, then even when the effective exogenous infection rate, $\frac{\lambda}{\mu}$, is small (i.e., $0 < \frac{\lambda}{\mu} \ll 1$), the exogenous infection rate, $\gamma$, can leverage dense subgraphs to spread the infection throughout the network.

On the other hand, if the endogenous infection rate, $\gamma$, is large (i.e., $\gamma \gg 1$), then most certainly the epidemics will spread throughout the entire network. Theorem~\ref{theorem3} states when this does not happen. Furthermore, Theorem~\ref{theorem3} shows that it is important to consider if the densest subgraph in the network is the entire network or a smaller subgraph. Corollary~\ref{corollary3} proves that the existence of the non-degenerate configurations is related to the existence of subgraphs with density larger than the network density. The existence of these \emph{denser-than G} subgraphs is crucial to the existence of non-degenerate configurations (i.e., different from $\mathbf{x}^0$ and $\mathbf{x}^N$) as solutions to the Most-Probable Configuration Problem; when the most-probable configuration is a non-degenerate configuration, agents belonging to denser subgraphs are more vulnerable to the epidemics.  

In network science, dense clusters of agents have often been identified as either the network \emph{core} or \emph{community} \cite{csermely2013structure, borgatti2000models, brandes2013studying}. Solving for the non-degenerate configuration is an alternative method for determining these network structures. Previous works in core/community detection are algorithmic and do not consider the dynamical process on the network. The scaled SIS process, however, is a model for dynamical processes on networks and, therefore, what is considered a \emph{community} changes depending on the parameters of the dynamical process: the most-probable configuration changes depending on the exogenous rates $\frac{\lambda}{\mu}$ and on the endogenous rates $\gamma$. 

\begin{figure}[ht]
        \centering
        \begin{subfigure}[b]{0.4\textwidth}
                \centering
                \includegraphics[width=\textwidth]{./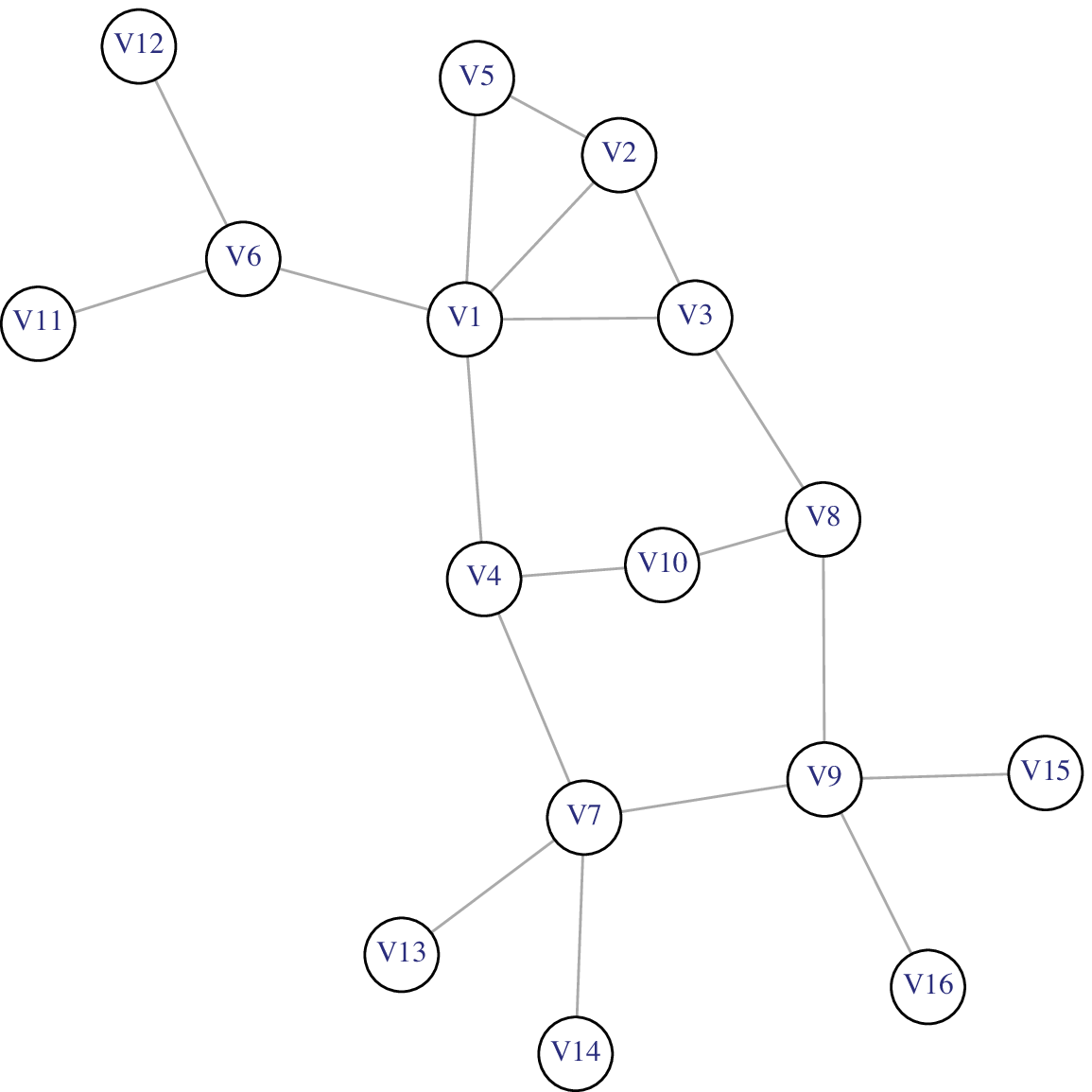}
                \caption{$\frac{\lambda}{\mu} = 0.5, \gamma = 1.2$ \\$d(H(\mathbf{x^*})) = 0$}
                \label{fig:ER16a_lau1_mu2_g1.2}
        \end{subfigure}%
        ~ 
        \begin{subfigure}[b]{0.4\textwidth}
                \centering
                \includegraphics[width=\textwidth]{./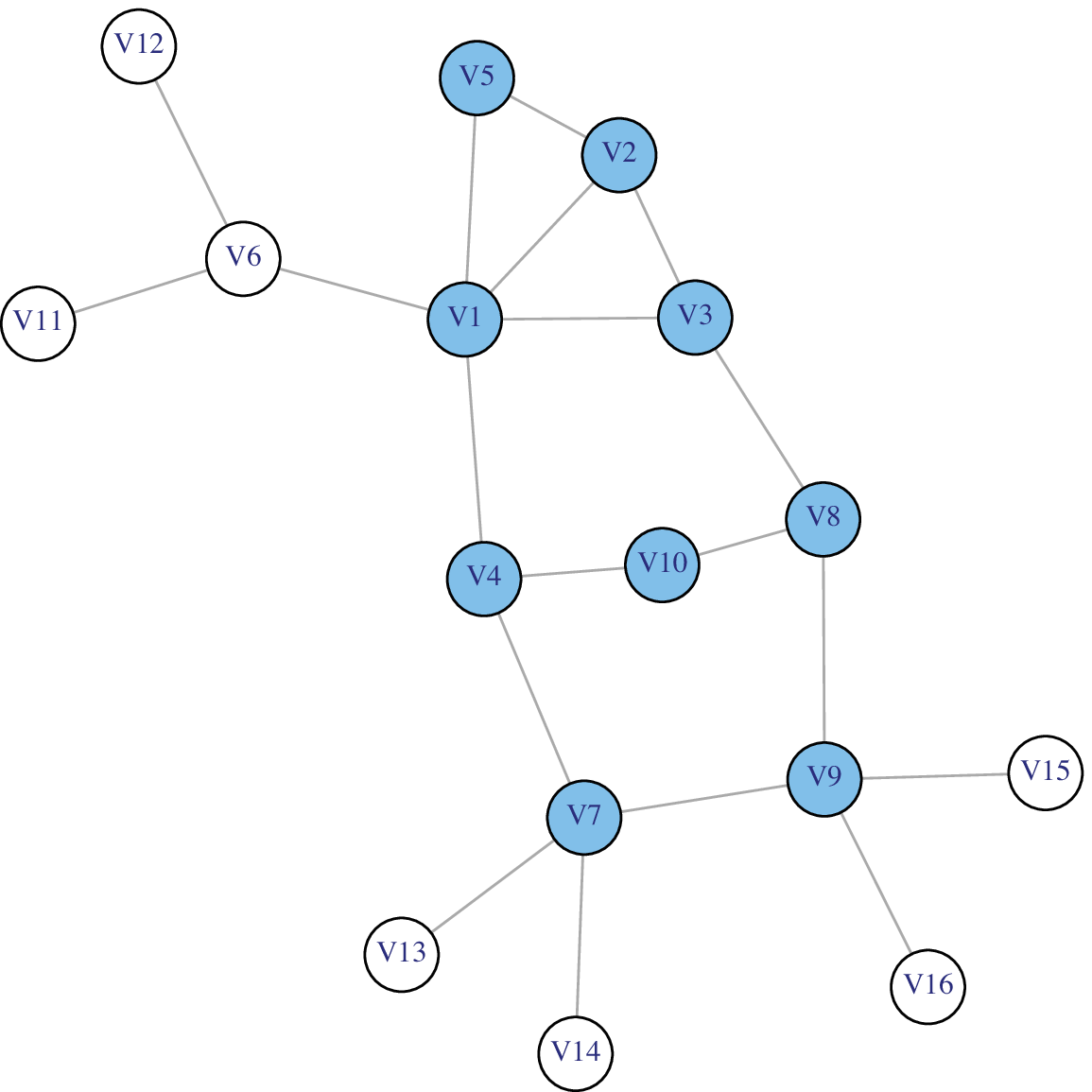}
                \caption{$\frac{\lambda}{\mu} = 0.5, \gamma = 1.7$ \\$d(H(\mathbf{x^*})) = 1.33$}
                \label{fig:ER16a_lau1_mu2_g1.6}
        \end{subfigure}
        
        ~ 
         \begin{subfigure}[b]{0.4\textwidth}
                \centering
                \includegraphics[width=\textwidth]{./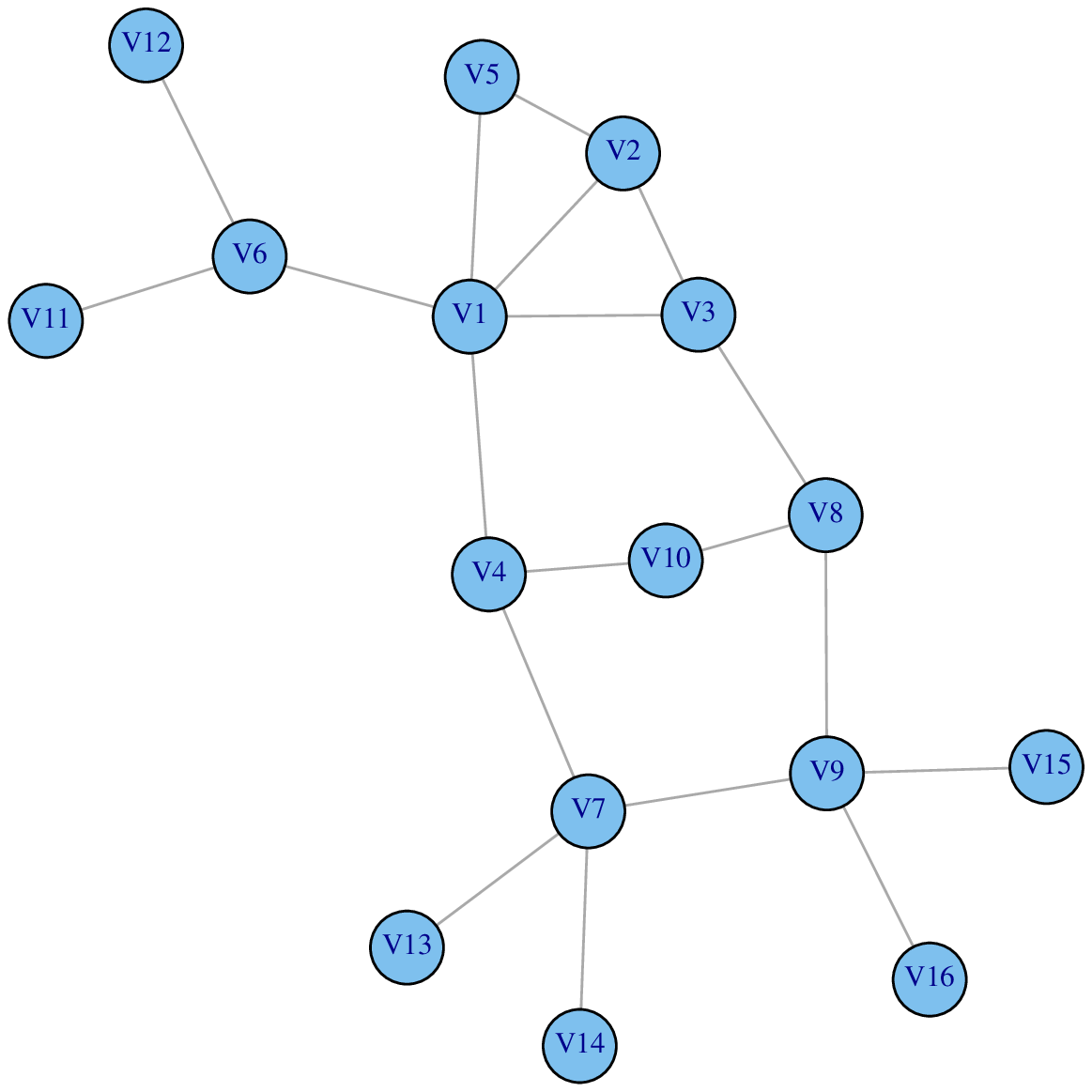}
                \caption{$\frac{\lambda}{\mu} = 0.5, \gamma = 2$ \\$d(H(\mathbf{x^*})) = 1.19$}
                \label{fig:ER16a_lau1_mu2_g2}
        \end{subfigure}
         ~ 
          \begin{subfigure}[b]{0.4\textwidth}
                \centering
                \includegraphics[width=\textwidth]{./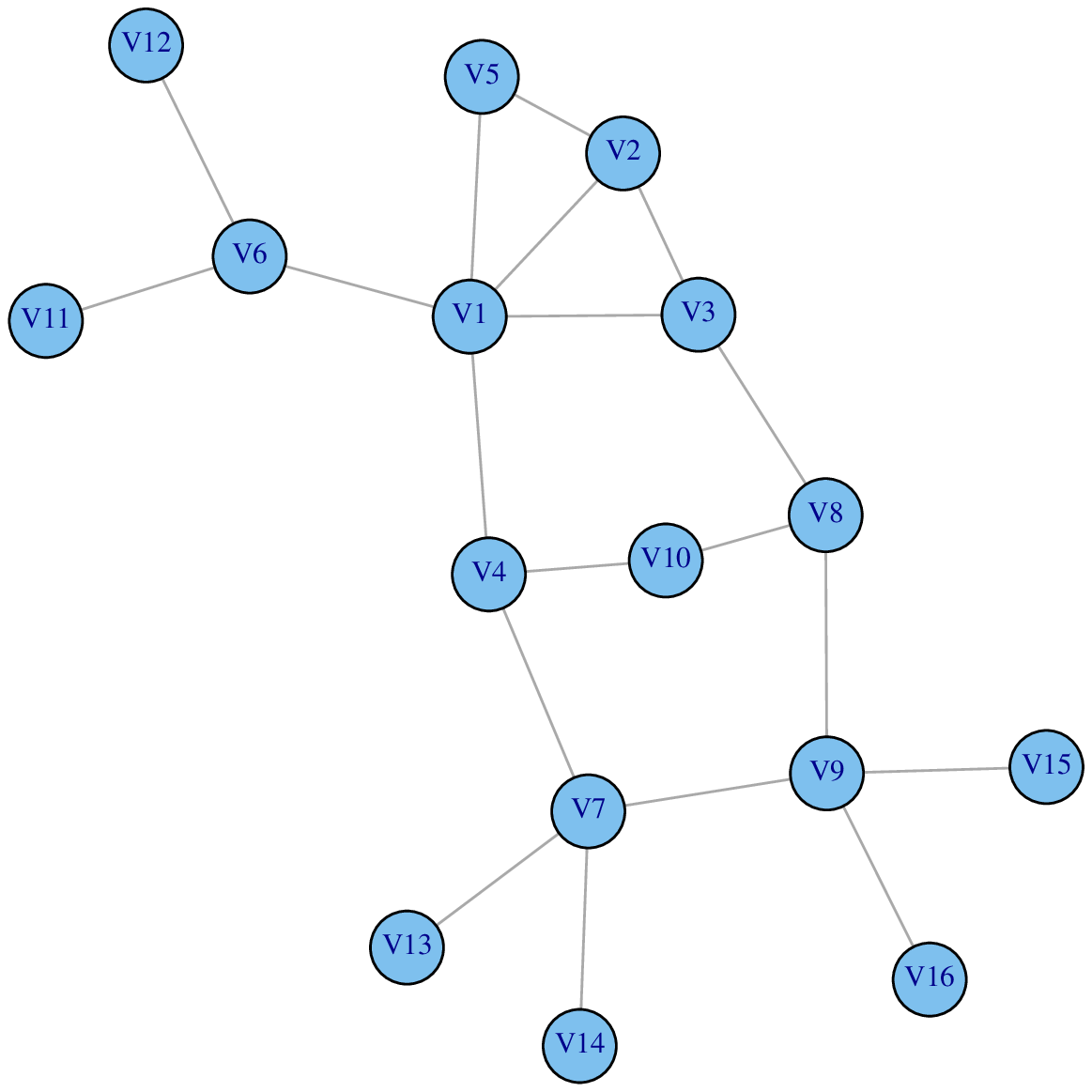}
                \caption{$\frac{\lambda}{\mu} = 0.5, \gamma = 3$ \\$d(H(\mathbf{x^*})) = 1.19$}
                \label{fig:ER16a_lau1_mu2_g3}
        \end{subfigure}
	\caption{(Color online) Most-Probable Configuration $\mathbf{x}^*$ under Different $\left(\frac{\lambda}{\mu}, \gamma\right)$ Parameters (Blue/Grey = Infected, White = Healthy).}\label{fig:ER16a}
\end{figure}

\begin{figure}[ht]
        \centering
        \begin{subfigure}[b]{0.4\textwidth}
                \centering
                \includegraphics[width=\textwidth]{./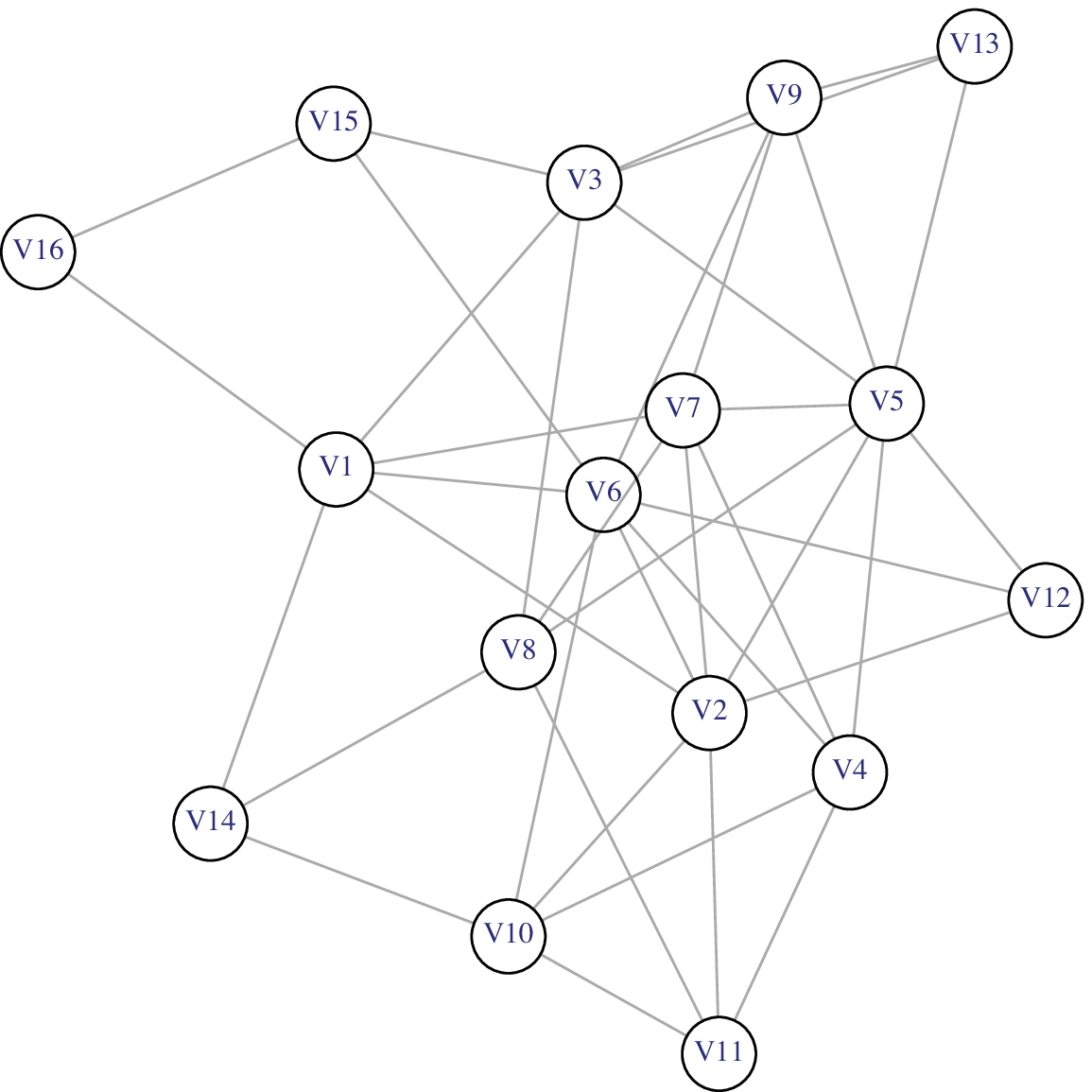}
                \caption{$\frac{\lambda}{\mu} = 0.5, \gamma = 1.2 $\\$d(H(\mathbf{x^*})) = 0$}
                \label{fig:Core16WSc_lau1_mu2_g1.2}
        \end{subfigure}%
        ~ 
        \begin{subfigure}[b]{0.4\textwidth}
                \centering
                \includegraphics[width=\textwidth]{./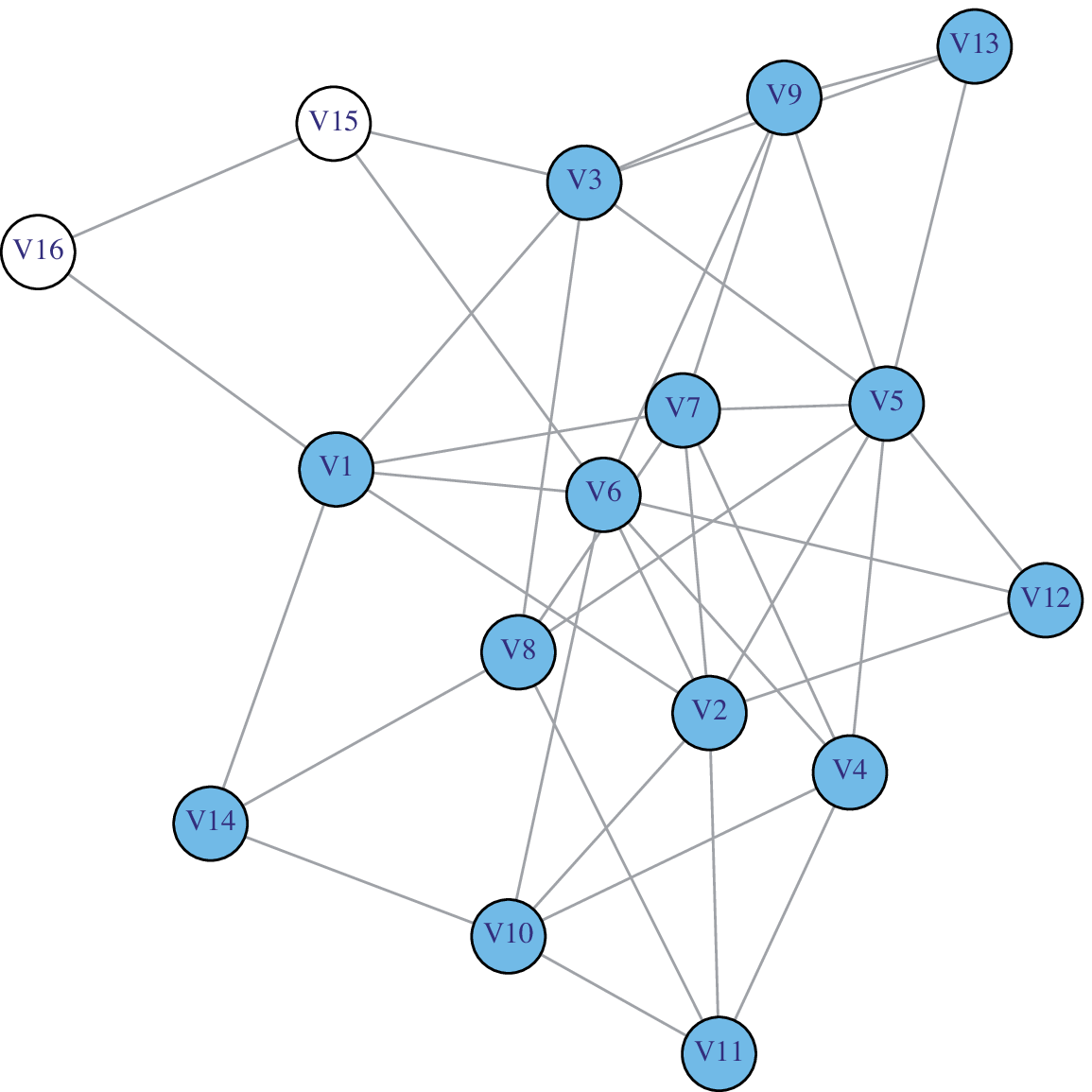}
                \caption{$\frac{\lambda}{\mu} = 0.5, \gamma = 1.38$\\$d(H(\mathbf{x^*})) = 2.5$}
                \label{fig:Core16WSc_lau1_mu2_g1.38}
        \end{subfigure}
        
        ~ 
         \begin{subfigure}[b]{0.4\textwidth}
                \centering
                \includegraphics[width=\textwidth]{./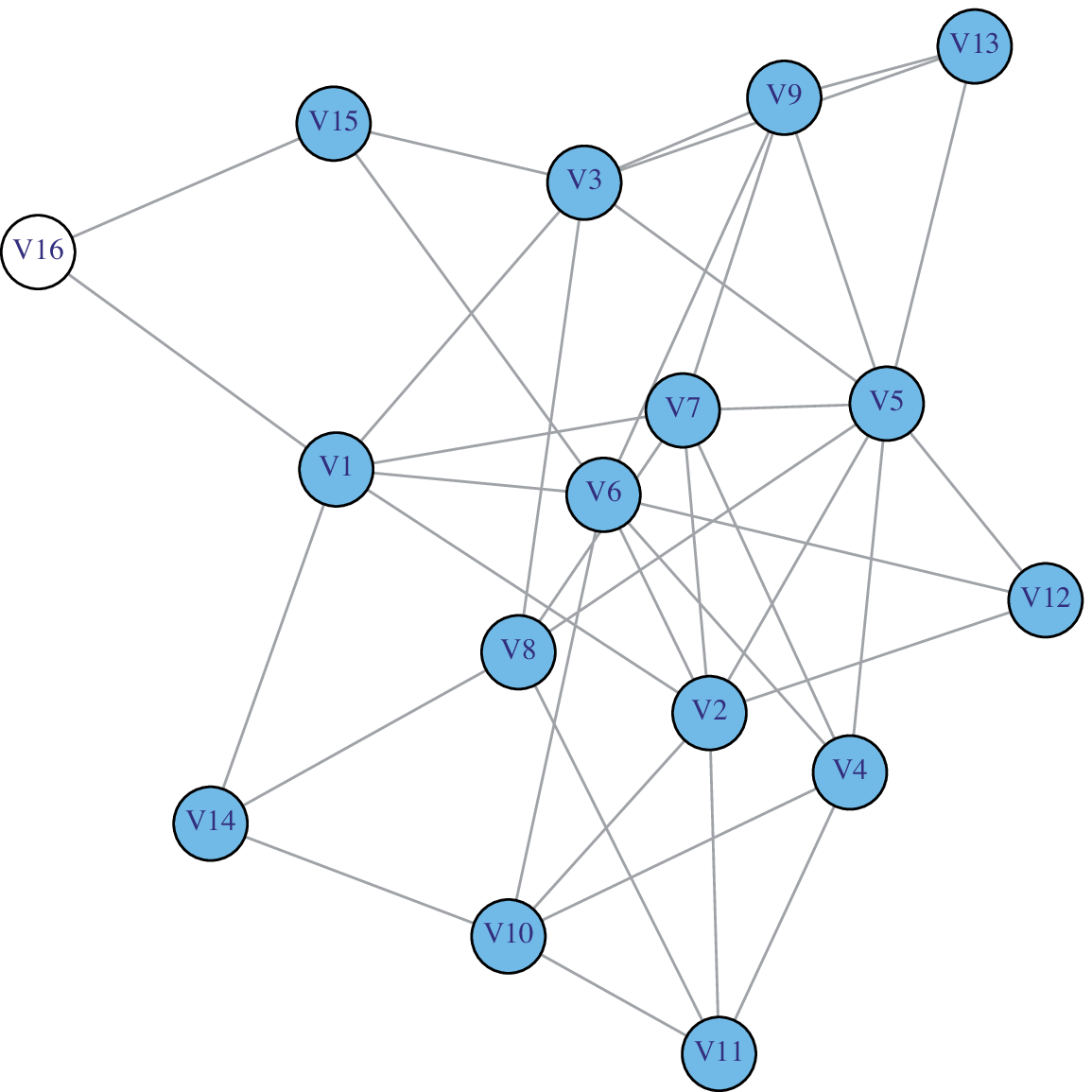}
                \caption{$\frac{\lambda}{\mu} = 0.5, \gamma = 1.41$\\$d(H(\mathbf{x^*})) = 2.467$}
                \label{fig:Core16WSc_lau1_mu2_g1.41}
        \end{subfigure}
         ~ 
          \begin{subfigure}[b]{0.4\textwidth}
                \centering
                \includegraphics[width=\textwidth]{./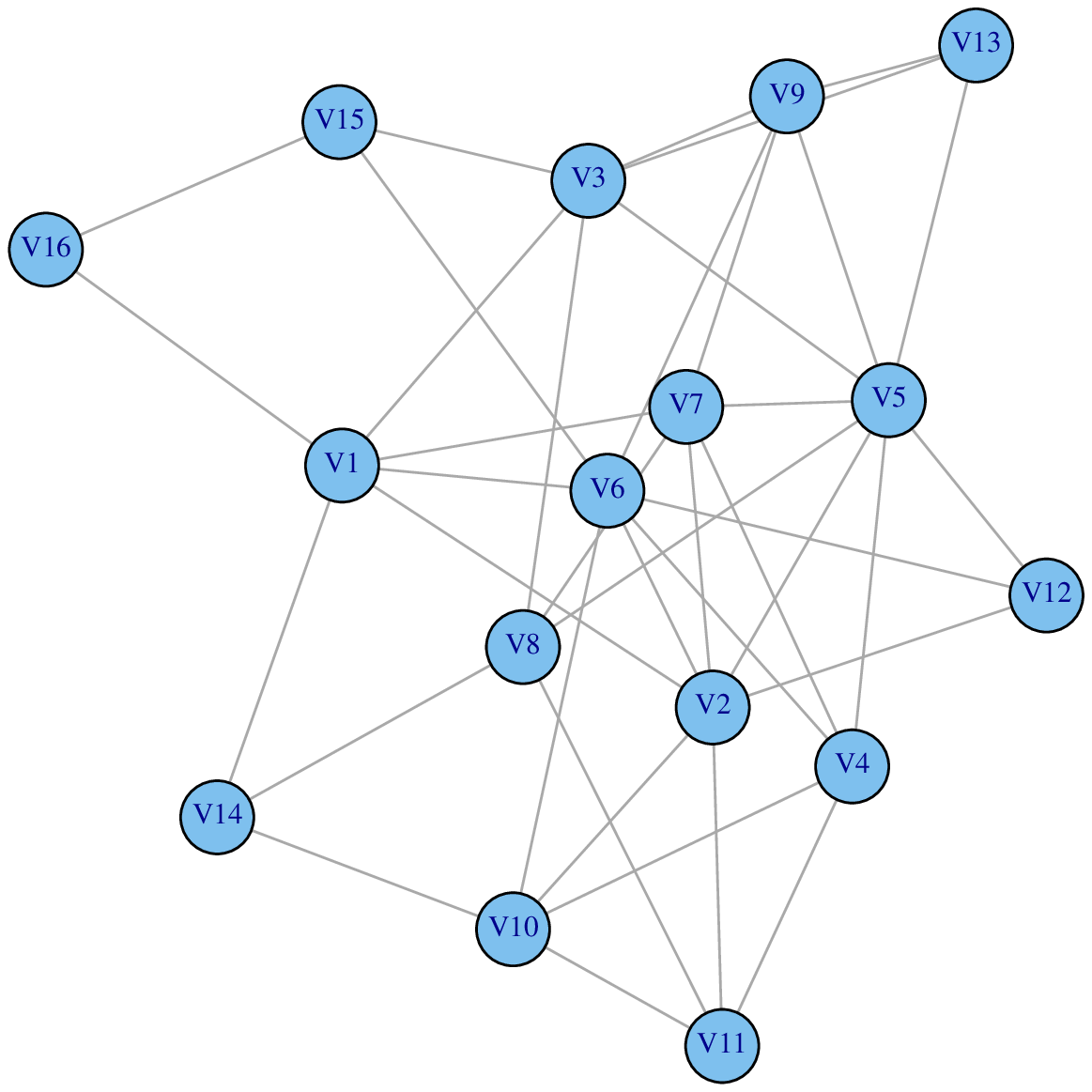}
                \caption{$\frac{\lambda}{\mu} = 0.5, \gamma = 1.7$\\$d(H(\mathbf{x^*})) = 2.4375$}
                \label{fig:Core16WSc_lau1_mu2_g1.6}
        \end{subfigure}
	\caption{(Color online) Most-Probable Configuration $\mathbf{x}^*$ under Different $\left(\frac{\lambda}{\mu}, \gamma\right)$ Parameters (Blue/Grey = Infected, White = Healthy).}\label{fig:Core16WSc}
\end{figure}

To get an easy visual interpretation of Theorem~\ref{theorem2} and~\ref{theorem3}, we illustrate them with two small 16-node examples; Network~A shown in Fig.~\ref{fig:ER16a} and Network~B in  Fig.~\ref{fig:Core16WSc}. For each network, we fix the effective exogenous infection rate, $\frac{\lambda}{\mu} = 0.5$. We then solve for the most-probable configuration for different $\gamma$, ranging from $1.2$ to $3$. As the endogenous infection rate, $\gamma$, changes, the most-probable configuration also changes. In Fig.~\ref{fig:ER16a_lau1_mu2_g1.2} and Fig.~\ref{fig:Core16WSc_lau1_mu2_g1.2}, neither network supports dense enough subgraphs for the epidemics to spread. But as $\gamma$ increases, the infection starts to spread. In Network A, there is at least one subgraph denser than the network. The subgraph induced by $V1, V2, V3, V4, V5, V7, V8, V9, V10$ has a density of 1.33 whereas the density of the entire network is 1.19. In Fig.~\ref{fig:ER16a_lau1_mu2_g1.6}, the most-probable configuration has these 9 agents infected while the other 7 agents remain healthy. The 9 agents in the dense subgraph are more vulnerable to the epidemics when $\frac{\lambda}{\mu} = 0.5$ and $\gamma = 1.7$. 

In Network B, there are at least two subgraphs denser than the network and they are induced by the set of infected agents of the most-probable configuration as shown in Fig.~\ref{fig:Core16WSc_lau1_mu2_g1.38} and Fig.~\ref{fig:Core16WSc_lau1_mu2_g1.41}. We can see by solving for the most-probable configuration for different parameter values that, as the endogenous infection increases, the most-probable configuration goes toward $\mathbf{x}^N$ as all agents become vulnerable to the epidemics.

It is easier for the infection to spread in Network B than in Network A, since, at the same effective exogenous infection rate, $\mathbf{x}^* = \mathbf{x}^N$ for Network B when $\gamma = 1.7$ but $\mathbf{x}^* \neq \mathbf{x}^N$ for Network A with the same exogenous infection rate. This is because Network B is a denser graph ($d(G) = 2.4375) $ than Graph A ($d(G) = 1.19$).


\subsection{Most-Probable Configuration and the Densest Subgraph}
We showed that the most-probable configuration is related to the density of induced subgraphs in the network. The densest subgraph, $\overline{H}$, is a special induced subgraph. In this section, we focus specifically on the relationship between the most-probable configuration and the densest subgraph.

\begin{corollary}\label{corollary2}[Proof in Appendix~\ref{proofcorollary2}]
The most-probable configuration $\mathbf{x}^* = \mathbf{x}^0$ if and only if $\lambda \gamma^{d(\overline{H})} \le \mu$.
\end{corollary}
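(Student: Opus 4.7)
The plan is to derive Corollary~\ref{corollary2} directly from Theorem~\ref{theorem2} by observing that, in Regime II), the function $d \mapsto \lambda\gamma^{d}$ is monotone non-decreasing in the subgraph density $d$, so the densest subgraph $\overline{H}$ realizes the worst case of the condition in Theorem~\ref{theorem2}.

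More concretely, I would first restate Theorem~\ref{theorem2} in its contrapositive form: $\mathbf{x}^* = \mathbf{x}^0$ if and only if $\lambda\gamma^{d(H)} \le \mu$ for \emph{every} induced subgraph $H \in \mathcal{H}$. The goal is then to collapse this quantifier over all $H \in \mathcal{H}$ into the single condition on $\overline{H}$.

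For the ``if'' direction, assume $\lambda\gamma^{d(\overline{H})} \le \mu$. By Definition~\ref{def:densest}, $d(\overline{H}) \ge d(H)$ for every $H \in \mathcal{H}$. Since we are in Regime II) we have $\gamma > 1$, so the map $d \mapsto \gamma^d$ is strictly increasing, and therefore $\lambda\gamma^{d(H)} \le \lambda\gamma^{d(\overline{H})} \le \mu$ for every $H \in \mathcal{H}$. Theorem~\ref{theorem2} then gives $\mathbf{x}^* = \mathbf{x}^0$. For the ``only if'' direction, suppose $\mathbf{x}^* = \mathbf{x}^0$. By the contrapositive of Theorem~\ref{theorem2}, $\lambda\gamma^{d(H)} \le \mu$ for every $H \in \mathcal{H}$, and in particular for the specific choice $H = \overline{H}$, yielding $\lambda\gamma^{d(\overline{H})} \le \mu$.

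There is essentially no hard step here; the only thing to be careful about is that $\overline{H}$ belongs to $\mathcal{H}$ (so that Theorem~\ref{theorem2} can be applied to it) and that $\gamma > 1$ is indeed available under Assumption~1. Both are immediate from the definitions and the standing assumptions, so the proof is really just an extremization of Theorem~\ref{theorem2} along the density-ordering of $\mathcal{H}$.
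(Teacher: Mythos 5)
Your proposal is correct and follows essentially the same route as the paper's own proof: both directions reduce to the contrapositive of Theorem~\ref{theorem2}, using $\gamma>1$ and the maximality of $d(\overline{H})$ to collapse the quantifier over all $H\in\mathcal{H}$ into the single condition on the densest subgraph. Your write-up is, if anything, slightly more explicit about why $\overline{H}\in\mathcal{H}$ and where $\gamma>1$ is used, but the argument is the same.
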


Corollary~\ref{corollary2} follows the result of Theorem \ref{theorem2}. If the densest subgraph in the network is not \emph{dense enough} to overcome individual preferences for being healthy, then the endogenous infection rate $\gamma$ will not be able to drive the most-probable configuration away from $\mathbf{x}^0$.

Lastly, because of the connection between the most-probable configuration of the scaled SIS process and the densest subgraph, we can prove a \emph{general} statement regarding network structure using results from dynamical processes on networks. 

\begin{corollary}\label{corollary4}[Proof in Appendix~\ref{proofcorollary4}]
If $G$ is a $k$-regular, complete multipartite, or complete multipartite with $k$-regular islands network, then $\overline{H} = G$. That is, for these structured networks, the densest subgraph is the overall graph.
\end{corollary}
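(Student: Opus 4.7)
The plan is to prove each of the three cases separately, in each case producing an upper bound that shows $d(H) \leq d(G)$ for every induced subgraph $H \in \mathcal{H}$, so that $G$ itself attains the maximum in Definition~\ref{def:densest}. The $k$-regular case is essentially immediate: every vertex of $G$ has degree $k$, so $d(G) = k/2$, and for any induced subgraph $H$ on $n$ vertices, each vertex in $V(H)$ has $H$-degree at most $k$ (its $H$-neighbors are a subset of its $G$-neighbors). The handshake lemma then gives $|E(H)| \leq kn/2$, whence $d(H) \leq k/2 = d(G)$.

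For the complete multipartite case, let $V_1, \ldots, V_r$ be the parts of $G$ with $|V_i| = n_i$. An induced subgraph of $G$ is specified by choosing $m_i \in \{0, 1, \ldots, n_i\}$ vertices from each $V_i$, and the density of that subgraph is the function
\begin{equation*}
f(m_1, \ldots, m_r) = \frac{\sum_{i<j} m_i m_j}{\sum_i m_i}.
\end{equation*}
I would show that $f$ is nondecreasing in each coordinate by directly computing the partial derivative and, using the identity $2\sum_{i<j} m_i m_j = S^2 - \sum_i m_i^2$ with $S = \sum_i m_i$, simplifying it to
\begin{equation*}
\frac{\partial f}{\partial m_\ell} = \frac{(S - m_\ell)^2 + \sum_{i \neq \ell} m_i^2}{2 S^2} \geq 0.
\end{equation*}
Coordinatewise monotonicity then forces the maximum of $f$ over the box $\prod_i [0, n_i]$ to be attained at $(n_1, \ldots, n_r)$, i.e., at $G$ itself.

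For the complete multipartite graph with $k$-regular islands, I would combine the two previous bounds. The edges of any induced subgraph $H$ split into intra-island edges, bounded above by $k m_i / 2$ on island $i$ by the $k$-regular argument applied island by island, and inter-island edges, exactly $\sum_{i<j} m_i m_j$ by the multipartite structure. Dividing by $\sum_i m_i$ yields $d(H) \leq k/2 + f(m_1, \ldots, m_r)$, and both terms are simultaneously maximized at $m_i = n_i$, where they sum to $d(G)$. The main obstacle is the monotonicity-of-$f$ step in the multipartite case; once the derivative identity above is in hand, each family reduces to a short bound, and the combined case is essentially the sum of the other two.
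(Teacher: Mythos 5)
Your proof is correct, and it takes a genuinely different --- in fact opposite --- route from the paper's. The paper proves this corollary \emph{indirectly}: it invokes the result from \cite{JZhangJournal} that for $k$-regular, complete multipartite, and complete multipartite with $k$-regular islands networks the most-probable configuration in Regime II) is always $\mathbf{x}^0$ and/or $\mathbf{x}^N$ for every choice of $0<\frac{\lambda}{\mu}\le 1$, $\gamma>1$, and then applies the contrapositive of Corollary~\ref{corollary3} to conclude that the density condition must fail for every induced subgraph, forcing $d(H)\le d(G)$ for all $H\in\mathcal{H}$; the stated point of that argument is precisely to derive a structural fact about graphs from a result about the dynamics. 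You instead give a direct, self-contained combinatorial verification: the degree bound $\lvert E(H)\rvert\le k\lvert V(H)\rvert/2$ in the regular case; coordinatewise monotonicity of $f(m_1,\ldots,m_r)$ in the multipartite case (your derivative identity checks out: with $P=\sum_{i<j}m_im_j$ and $S=\sum_i m_i$ one has $\partial f/\partial m_\ell = \bigl((S-m_\ell)S-P\bigr)/S^2$ and $(S-m_\ell)S-P=\tfrac{1}{2}\bigl((S-m_\ell)^2+\sum_{i\ne\ell}m_i^2\bigr)\ge 0$, so the maximum over the box sits at the corner $(n_1,\ldots,n_r)$, i.e., at $G$); and the sum of the two bounds in the combined case, where the inter-island edge count is exact and the intra-island count is bounded island by island, with both bounds attained simultaneously at $G$. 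What your approach buys is independence from \cite{JZhangJournal}, from Corollary~\ref{corollary3}, and from any assumption on the epidemics parameters, and it makes the inequality $d(H)\le d(G)$ transparent; what it forgoes is the paper's deliberate illustration that the scaled SIS machinery can be turned around to prove purely graph-theoretic statements. Two minor points to tidy up: $f$ is undefined at $m=0$, so treat the empty subgraph separately (its density is $0$ by the paper's convention, trivially at most $d(G)$), and note that your combined case assumes all islands share the same regularity degree $k$, which is what the statement intends.
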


\section{Conclusion}\label{sec:conclusion}
We introduced in previous works the scaled SIS process, which is a mathematically analyzable model for modeling diffusion processes on a static network \cite{JZhangJournal}. The scaled SIS process is a reversible Markov process and has a closed-form equilibrium distribution that explicitly accounts for the underlying network topology via the adjacency matrix. It is parameterized by 2 parameters: $\left(\frac{\lambda}{\mu}, \gamma \right)$. The effective exogenous infection rate $\frac{\lambda}{\mu}$ controls the exogenous, or the topology-independent behavior of the scaled SIS process whereas the exogenous infection rate $\gamma$ controls the endogenous or the topology-dependent behavior of the process. 

Depending on if the parameter values are between 0 and 1 or great than 1, the scaled SIS process models qualitatively different network diffusion processes. In Regime II) \textbf{Endogenous Infection Dominant:} $0 < \frac{\lambda}{\mu} \leq 1, \gamma > 1$, the scaled SIS process best models a network epidemics process; individuals prefer to be healthy, while neighbor-to-neighbor infection helps to spread the epidemics throughout the population.

This paper analyzes the Most-Probable Configuration Problem that solves for the network state with the maximum equilibrium probability, in Regime II) for arbitrary networks. First, we prove that the Most-Probable Configuration Problem in Regime II) is submodular. This means that we can compute the \emph{exact} most-probable configuration in \emph{polynomial time}. We use the most-probable configuration of the scaled SIS process to identify sets of vulnerable agents/components for a social network of drug users and the Western US power grid under different infection/healing rates.

We then showed that the most-probable configuration is dependent on certain classes of subgraphs in the networks. If there exist \emph{dense-enough} subgraphs, conditioned on the right set of parameters, the most-probable configuration will shift away from $\mathbf{x}^0$, the network state where all the agents are healthy. However, if there exist subgraphs that are \emph{denser-than} the entire network, conditioned on the right range of infection and healing rates, the most-probable configuration may not reach $\mathbf{x}^N$, the network state with all agents infected. We call the solution of the Most-Probable Configuration Problem that is neither $\mathbf{x}^0$ nor $\mathbf{x}^N$, the non-degenerate configuration. Non-degenerate configurations identify subsets of agents that are more vulnerable to the network epidemics than others. 

We also proved in this paper using results in \cite{JZhangJournal} that structured networks such as $k$-regular, complete multipartite, complete multipartite with $k$-regular islands do not contain subgraphs that are denser than the overall network. Therefore, if we want to avoid subsets of agents being more vulnerable than others, we should use these types of structured networks. 

Our analysis of the scaled SIS process in Regime II) informs us that network subgraph structures are important for understanding network diffusion processes. For future work, we are interested in statistically characterizing the subgraphs in network classes such as small-world networks and scaled-free networks.

\begin{acknowledgments}
This work is partially supported by AFOSR grant FA95501010291, and by NSF grants CCF1011903 and CCF1018509. We wish to thank Prof. Jo{\~a}o P. Costeira and Prof. Jo{\~a}o M.F. Xavier of the Department of Electrical and Computer Engineering at Instituto Superior T{\'e}cnico, Lisbon, Portugal, for discussions regarding submodular optimization.
\end{acknowledgments}

\appendix

\section{Proof for Theorem~\ref{theorem2}}\label{prooftheorem2}
\begin{theorem*}
The most-probable configuration $\mathbf{x}^* \neq \mathbf{x}^0$ if and only if there exists at least one induced subgraph $H \in \mathcal{H}$ with density $d(H)$ for which $\lambda\gamma^{d(H)} > \mu$.
\end{theorem*}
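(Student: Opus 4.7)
The plan is to work directly with the induced-subgraph form of the equilibrium distribution, \eqref{eq:equilibriumdistribution2}, and compare $\pi(\mathbf{x}^0)$ against $\pi(\mathbf{x})$ for a generic configuration. Since $\mathbf{x}^0$ induces the empty subgraph with $|V(H)|=0$, the vertex-power in \eqref{eq:equilibriumdistribution2} vanishes and so $\pi(\mathbf{x}^0) = 1/Z$. For any other $\mathbf{x}$ inducing a subgraph $H$ with $|V(H)| \geq 1$, the distribution reads $\pi(\mathbf{x}) = (1/Z)\bigl((\lambda/\mu)\gamma^{d(H)}\bigr)^{|V(H)|}$ with strictly positive base. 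Using the elementary fact that $X^n > 1$ is equivalent to $X > 1$ whenever $X > 0$ and $n$ is a positive integer, the inequality $\pi(\mathbf{x}) > \pi(\mathbf{x}^0)$ collapses to $(\lambda/\mu)\gamma^{d(H)} > 1$, i.e., $\lambda\gamma^{d(H)} > \mu$.

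With this reduction in hand, both directions of the biconditional follow almost immediately. For $(\Rightarrow)$, if $\mathbf{x}^* \neq \mathbf{x}^0$ then $\mathbf{x}^0$ is not a maximizer of $\pi$, so some configuration $\mathbf{x}$ achieves $\pi(\mathbf{x}) > \pi(\mathbf{x}^0)$; by the equivalence above, the induced subgraph $H$ of this $\mathbf{x}$ witnesses $\lambda\gamma^{d(H)} > \mu$. For $(\Leftarrow)$, given such an $H$, I first confirm that $H \neq \emptyset$: under Regime II we have $\lambda/\mu \leq 1$, and the empty subgraph has $d(\emptyset)=0$ so $\lambda\gamma^{0} = \lambda \leq \mu$, contradicting the hypothesis. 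Hence $|V(H)| \geq 1$, the reduction yields $\pi(H) > \pi(\mathbf{x}^0)$, and $\mathbf{x}^0$ cannot be the maximizer, so $\mathbf{x}^* \neq \mathbf{x}^0$.

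There is no serious obstacle here; the argument is essentially algebraic once $\pi$ is expressed via \eqref{eq:equilibriumdistribution2}. The two points worth handling carefully are treating the $\mathbf{x}^0$ case separately so that the monotonicity $X^n > 1 \Leftrightarrow X > 1$ is valid (avoiding the zero-exponent degeneracy), and invoking the Regime II hypothesis $\lambda/\mu \leq 1$ to rule out $H = \emptyset$ as a witness in the backward direction.
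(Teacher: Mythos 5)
Your proof is correct and follows essentially the same route as the paper's: compare $\pi(\mathbf{x}^0)=1/Z$ with $\pi(\mathbf{x})=\frac{1}{Z}\bigl((\lambda/\mu)\gamma^{d(H)}\bigr)^{|V(H)|}$ and reduce the inequality to $(\lambda/\mu)\gamma^{d(H)}>1$ using positivity of the base and $|V(H)|\geq 1$. Your explicit use of the Regime II bound $\lambda/\mu\leq 1$ to rule out the empty subgraph as a witness is a small extra precaution the paper leaves implicit, but otherwise the arguments coincide.
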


\begin{proof}

\textbf{Sufficiency:} If there exists at least one subgraph $H \in \mathcal{H}$ with density $d(H)$ for which $\lambda\gamma^{d(H)} > \mu$, then $\mathbf{x}^* \neq \mathbf{x}^0$.

Using the equilibrium distribution \eqref{eq:equilibriumdistribution}, $\pi(\mathbf{x}^0) = \frac{1}{Z}$. Let the subgraph $H \in \mathcal{H}$ be the subgraph induced by configuration $\mathbf{x}' \in \mathcal{X} \setminus \mathbf{x}^0$. The number of infected agents in configuration $\mathbf{x}'$ is $1^T\mathbf{x}' = \mid V(H) \mid > 0$. Using \eqref{eq:equilibriumdistribution2}, its equilibrium probability is 
\[
\pi(\mathbf{x}') = \pi(H) = \frac{1}{Z}\left(  \left( \frac{\lambda}{\mu}\right)  \gamma^{d(H)} \right)^{\mid V(H) \mid}
\]

If $ \left( \frac{\lambda}{\mu}\right)\gamma^{d(H)} > 1$, we know that $\pi(\mathbf{x}') > \pi(\mathbf{x}^0)$. Therefore, $\mathbf{x}^0$ can not be the most-probable configuration.

\textbf{Necessity:} If $\mathbf{x}^* \neq \mathbf{x}^0$, then there exist at least one subgraph $H \in \mathcal{H}$ with density $d(H)$ for which $\lambda\gamma^{d(H)} > \mu$.

If $\mathbf{x}^* \neq \mathbf{x}^0$, this means that there is some configuration $\mathbf{x}'$ for which $\pi({\bf x'}) > \pi({\bf x}^0)$. We know that $\pi({\bf x}^0) = \frac{1}{Z}$. Using the equilibrium distribution in \eqref{eq:equilibriumdistribution2} and the fact that ${1^T{\bf x}} = |V(H)| > 0, \, \forall \, \mathbf{x} \in \mathcal{X} \setminus  {\bf x}^0 $, we can conclude that there must exist some induced subgraph whose density satisfies this condition $\left( \frac{\lambda}{\mu}\right)  \gamma^{d(H(\mathbf{x'})}  >1$.
\end{proof}


\section{Proof for Theorem~\ref{theorem3}}\label{prooftheorem3}
\begin{theorem*}
Case 1: The densest subgraph, $\overline{H}$, is the network $G$. Then, $\mathbf{x}^* \neq \mathbf{x}^N$ if and only if $\frac{\lambda}{\mu}\gamma^{d(G)} \leq 1$.

Case 2: The densest subgraph, $\overline{H}$, is not the network $G$. Then, $\mathbf{x}^*\neq\mathbf{x}^N$ if and only if there exists at least one induced subgraph $H \in \mathcal{H} \setminus G$ with density $d(H) = \frac{E'}{N'}$ for which
\begin{align}\label{eq:condition3}
\frac{\log(\frac{\lambda}{\mu}\gamma^{d(G)})}{\log(\frac{\lambda}{\mu}\gamma^{d(H)} )} < \frac{N'}{N}.
\end{align}
\end{theorem*}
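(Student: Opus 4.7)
The proof reduces both cases to a direct comparison of $\pi(\mathbf{x}^N)$ with $\pi(H)$ for an arbitrary induced subgraph $H\in\mathcal{H}$ with $N'=|V(H)|$ and density $d(H)=E'/N'$. From the subgraph form of the equilibrium distribution in equation~\eqref{eq:equilibriumdistribution2},
\begin{equation*}
\pi(\mathbf{x}^N)=\tfrac{1}{Z}\left(\tfrac{\lambda}{\mu}\gamma^{d(G)}\right)^{N}, \qquad \pi(H)=\tfrac{1}{Z}\left(\tfrac{\lambda}{\mu}\gamma^{d(H)}\right)^{N'},
\end{equation*}
so ``$\mathbf{x}^*\neq\mathbf{x}^N$'' is equivalent to the existence of some $H\in\mathcal{H}$ (possibly the empty subgraph $H(\mathbf{x}^0)$) with $\pi(H)\geq \pi(\mathbf{x}^N)$. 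The plan is to turn this into an inequality on logarithms and then divide by $\log(\tfrac{\lambda}{\mu}\gamma^{d(H)})$, with careful sign bookkeeping.

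For Case~1, the hypothesis $\overline{H}=G$ forces $d(H)\leq d(G)$ and $N'\leq N$ for every $H\in\mathcal{H}$. Sufficiency is immediate: $\tfrac{\lambda}{\mu}\gamma^{d(G)}\leq 1$ yields $\pi(\mathbf{x}^N)\leq \tfrac{1}{Z}=\pi(\mathbf{x}^0)$, so $\mathbf{x}^0$ weakly beats $\mathbf{x}^N$. Necessity follows by contrapositive: assuming $\tfrac{\lambda}{\mu}\gamma^{d(G)}>1$, a short case split on whether $\tfrac{\lambda}{\mu}\gamma^{d(H)}$ exceeds $1$ shows $(\tfrac{\lambda}{\mu}\gamma^{d(G)})^{N}\geq (\tfrac{\lambda}{\mu}\gamma^{d(H)})^{N'}$: when it does, use $d(H)\leq d(G)$ together with $N'\leq N$ and the monotonicity of $t\mapsto t^{k}$ for $t\geq 1$; when it does not, the right-hand side is at most $1$ while the left exceeds $1$.

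For Case~2, I would write $\pi(H)>\pi(\mathbf{x}^N)$ as
\begin{equation*}
N'\log\!\left(\tfrac{\lambda}{\mu}\gamma^{d(H)}\right)>N\log\!\left(\tfrac{\lambda}{\mu}\gamma^{d(G)}\right),
\end{equation*}
and, in the regime where $\log(\tfrac{\lambda}{\mu}\gamma^{d(H)})>0$, divide to obtain $N'/N > \log(\tfrac{\lambda}{\mu}\gamma^{d(G)})/\log(\tfrac{\lambda}{\mu}\gamma^{d(H)})$, which is exactly the condition~\eqref{eq:condition3}. Sufficiency then follows by plugging the hypothesized $H$ into this rearrangement, and necessity follows by producing an analogous $H'$ from any configuration beating $\mathbf{x}^N$. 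The assumption $\overline{H}\neq G$ guarantees a candidate $H$ with $d(H)>d(G)$ for which the ratio condition is genuinely informative; boundary situations in which $\tfrac{\lambda}{\mu}\gamma^{d(G)}\leq 1$ are handled by comparing with $\mathbf{x}^0$ directly, as in Case~1.

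The main obstacle is the sign handling when dividing by $\log(\tfrac{\lambda}{\mu}\gamma^{d(H)})$: because Regime~II) permits $\tfrac{\lambda}{\mu}\leq 1$ together with $\gamma^{d(H)}\geq 1$, the product $\tfrac{\lambda}{\mu}\gamma^{d(H)}$ can fall on either side of~$1$. This is precisely what distinguishes the two cases of the theorem: when $\overline{H}=G$, no subgraph's log grows fast enough to overcome the factor $N\gg N'$ in the exponent, so only the sign of $\log(\tfrac{\lambda}{\mu}\gamma^{d(G)})$ itself controls the comparison; when $\overline{H}\neq G$, subgraphs with small $N'$ but large $d(H)$ can compensate in the exponentiated comparison, and the ratio bound~\eqref{eq:condition3} quantifies exactly when this tradeoff tips against $\mathbf{x}^N$.
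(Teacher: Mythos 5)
Your proposal follows essentially the same route as the paper's proof: both reduce the claim to a direct comparison of $\pi(\mathbf{x}^N)$ with $\pi(H)$ using the subgraph form of the equilibrium distribution in \eqref{eq:equilibriumdistribution2}, take logarithms, and rearrange into the ratio condition. Where you differ is in Case~1, and there your version is actually the more complete one: the paper dispatches Case~1 necessity by citing Corollary~\ref{corollary2}, which characterizes when $\mathbf{x}^*=\mathbf{x}^0$ and does not by itself exclude a non-degenerate maximizer when $\frac{\lambda}{\mu}\gamma^{d(G)}>1$; your case split on whether $\frac{\lambda}{\mu}\gamma^{d(H)}$ exceeds $1$, combined with $d(H)\le d(G)$ and $N'\le N$, shows directly that $\mathbf{x}^N$ dominates every other configuration and closes that hole.

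The one residual issue is the sign obstacle you yourself flag in Case~2, which your proposal does not fully resolve on the necessity side. Given a configuration $\mathbf{x}'$ with $\pi(\mathbf{x}')>\pi(\mathbf{x}^N)$, passing from $N'\log\left(\frac{\lambda}{\mu}\gamma^{d(H')}\right)>N\log\left(\frac{\lambda}{\mu}\gamma^{d(G)}\right)$ to the ratio inequality \eqref{eq:condition3} requires $\log\left(\frac{\lambda}{\mu}\gamma^{d(H')}\right)>0$; if the winning configuration satisfies $\frac{\lambda}{\mu}\gamma^{d(H')}\le 1$ (for instance when $\mathbf{x}^*=\mathbf{x}^0$), the division reverses or is undefined, and ``comparing with $\mathbf{x}^0$'' establishes $\mathbf{x}^*\neq\mathbf{x}^N$ but does not exhibit a subgraph satisfying \eqref{eq:condition3}, which is what the necessity direction demands. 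You should be aware, however, that the paper's own proof performs exactly the same unguarded division without comment, so on this point your argument is no less rigorous than the original; fully patching it would require restricting the quantifier to subgraphs with $\frac{\lambda}{\mu}\gamma^{d(H)}>1$ or reformulating the condition multiplicatively as $N'\log\left(\frac{\lambda}{\mu}\gamma^{d(H)}\right)>N\log\left(\frac{\lambda}{\mu}\gamma^{d(G)}\right)$.
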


\begin{proof}

\textbf{Sufficiency:} Lets first prove sufficiency for both case 1 and case 2.

Case 1: $\overline{H} = G$. If $\lambda\gamma^{d(G)} \leq \mu$, then $\mathbf{x}^* \neq \mathbf{x}^N$.  

Follows from Corollary~\ref{corollary2}: If $\lambda \gamma^{d(\overline{H}(\mathbf{x}))} \le \mu$, then $\mathbf{x}^* = \mathbf{x}^0$.

Case 2: $\overline{H} \neq G$. If there exists at least one induced subgraph $H \in \mathcal{H}$ with density $d(H) = \frac{E'}{N'}$ such that $\frac{\log(\frac{\lambda}{\mu}\gamma^{d(G)})}{\log(\frac{\lambda}{\mu}\gamma^{d(H)} )} < \frac{N'}{N}$, then $\mathbf{x}^*\neq\mathbf{x}^N$.

The subgraph $H$ is induced by the configuration $\mathbf{x}' \in \mathcal{X}$. The log equilibrium probability according to \eqref{eq:equilibriumdistribution2} for $\mathbf{x}'$ and $\mathbf{x}^N$, respectively, are:
\[
\log(\pi(\mathbf{x}')) = \log\left(\frac{1}{Z}\right) + N' \log\left(\frac{\lambda}{\mu}\gamma^{d(H)} \right)
\] 
and
\[
\log(\pi(\mathbf{x}^N)) = \log\left(\frac{1}{Z}\right) + N \log\left(\frac{\lambda}{\mu}\gamma^{d(G)} \right).
\]

Condition $\frac{\log(\frac{\lambda}{\mu}\gamma^{d(G)})}{\log(\frac{\lambda}{\mu}\gamma^{d(H)} )} < \frac{N'}{N}$ implies that $N \log\left(\frac{\lambda}{\mu}\gamma^{d(G)}\right)  < N' \log\left(\frac{\lambda}{\mu}\gamma^{d(H)}\right)$. Therefore, $\log(\pi(\mathbf{x}')) >  \log(\pi(\mathbf{x}^N))$. Since the logarithm is a monotonic function, we can conclude that $\mathbf{x}^*\neq\mathbf{x}^N$.

\textbf{Necessity:} We now prove necessity for both case 1 and case 2.

Case 1: $\overline{H} = G$. If $\mathbf{x}^* \neq \mathbf{x}^N$, then $\lambda\gamma^{d(G)} \leq \mu$.

Follows from Corollary~\ref{corollary2}: If $\mathbf{x}^* = \mathbf{x}^0$, then $\lambda \gamma^{d(\overline{H}(\mathbf{x}))} \le \mu$.

Case 2: $\overline{H} \neq G$. If $\mathbf{x}^* \neq \mathbf{x}^N$, then there exists at least one induced subgraph $H \in \mathcal{H}$ such that $\frac{\log(\frac{\lambda}{\mu}\gamma^{d(G)})}{\log(\frac{\lambda}{\mu}\gamma^{d(H)} )} < \frac{N'}{N}$.

Let $\mathbf{x}^* = \mathbf{x}'$, which induces a subgraph $H \in \mathcal{H}$ with density $d(H)$. Using \eqref{eq:equilibriumdistribution2},
\[
\pi(\mathbf{x}') = \log\left(\frac{1}{Z}\right) + N'\log\left( \frac{\lambda}{\mu}  \gamma^{d(H)} \right)
\]
\[
\pi(\mathbf{x}^N) = \log\left(\frac{1}{Z}\right) + N\log\left( \frac{\lambda}{\mu}  \gamma^{d(G)} \right).
\]
This means $\pi(\mathbf{x}') - \pi(\mathbf{x}^N) > 0$, which implies
\[
N'\log\left( \frac{\lambda}{\mu}  \gamma^{d(H)} \right)  - N\log\left( \frac{\lambda}{\mu}  \gamma^{d(G)} \right) > 0
\]
This reduces to the condition that 
\[
\frac{\log(\frac{\lambda}{\mu}\gamma^{d(G)})}{\log(\frac{\lambda}{\mu}\gamma^{d(H)} )} < \frac{N'}{N}.
\]

\end{proof}

\section{Proof for Corollary~\ref{corollary3}}\label{proofcorollary3}

\begin{corollary*}
Let the density of the network be $d(G) = \frac{E}{N}$. Then, the most-probable configuration is a non-degenerate configuration, $\mathbf{x}^* \in \mathcal{X} \setminus \{\mathbf{x}^0, \mathbf{x}^N\}$, if and only if there exists at least one induced subgraph $H \in \mathcal{H}$ with density $d(H) = \frac{E'}{N'}$ for which $\lambda\gamma^{d(H)} > \mu$, \text{ and}

\[
\frac{\log(\frac{\lambda}{\mu}\gamma^{d(G)})}{\log(\frac{\lambda}{\mu}\gamma^{d(H)} )} < \frac{N'}{N}.
\]
\end{corollary*}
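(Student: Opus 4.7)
My plan is to prove this corollary by reducing it to the conjunction of Theorem~\ref{theorem2} and Theorem~\ref{theorem3}. The condition $\mathbf{x}^* \in \mathcal{X} \setminus \{\mathbf{x}^0,\mathbf{x}^N\}$ is just $\mathbf{x}^* \neq \mathbf{x}^0$ and $\mathbf{x}^* \neq \mathbf{x}^N$ simultaneously. Theorem~\ref{theorem2} already characterizes the first; the first inequality in the corollary is exactly that characterization. The second inequality is the condition appearing in Case~2 of Theorem~\ref{theorem3}. The nontrivial content of the corollary, beyond simply conjoining the two results, is that both conditions can be witnessed by the \emph{same} subgraph $H$, so I need to produce such a common witness in the necessity direction and to show the given common witness really certifies $\mathbf{x}^* \neq \mathbf{x}^N$ in the sufficiency direction.

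For sufficiency, suppose a single subgraph $H$ with $|V(H)| = N'$ and $|E(H)| = E'$ satisfies both listed conditions. The first condition, $\lambda\gamma^{d(H)} > \mu$, invokes Theorem~\ref{theorem2} directly and gives $\mathbf{x}^* \neq \mathbf{x}^0$. For $\mathbf{x}^* \neq \mathbf{x}^N$, note that the same condition forces $\log\!\left(\tfrac{\lambda}{\mu}\gamma^{d(H)}\right) > 0$, so cross-multiplying the log-ratio inequality preserves direction and yields
\[
N\log\!\left(\tfrac{\lambda}{\mu}\gamma^{d(G)}\right) \;<\; N'\log\!\left(\tfrac{\lambda}{\mu}\gamma^{d(H)}\right).
\]
Exponentiating and applying \eqref{eq:equilibriumdistribution2} gives $\pi(\mathbf{x}^N) < \pi(\mathbf{x}')$, where $\mathbf{x}'$ is the configuration inducing $H$, so $\mathbf{x}^N$ is not the most-probable configuration. (The case distinction in Theorem~\ref{theorem3} need not be invoked explicitly: if $\overline{H} = G$ then $d(H) \le d(G)$ makes the ratio $\ge 1 \ge N'/N$ with equality only when $H = G$, which would violate the strict inequality, so the hypotheses already force us into the regime that matters.)

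For necessity, assume $\mathbf{x}^*$ is non-degenerate, and take $H = H(\mathbf{x}^*)$ as the candidate common witness, with $N' = |V(H)|$ and $E' = |E(H)|$. Because $\mathbf{x}^* \neq \mathbf{x}^0$ we have $N' > 0$, and $\pi(\mathbf{x}^*) > \pi(\mathbf{x}^0) = 1/Z$ together with \eqref{eq:equilibriumdistribution2} gives $\bigl(\tfrac{\lambda}{\mu}\gamma^{d(H)}\bigr)^{N'} > 1$, i.e.\ $\lambda\gamma^{d(H)} > \mu$, establishing the first condition. Because $\mathbf{x}^* \neq \mathbf{x}^N$ we have $\pi(\mathbf{x}^*) > \pi(\mathbf{x}^N)$, which after taking logarithms of \eqref{eq:equilibriumdistribution2} reads $N'\log\!\bigl(\tfrac{\lambda}{\mu}\gamma^{d(H)}\bigr) > N\log\!\bigl(\tfrac{\lambda}{\mu}\gamma^{d(G)}\bigr)$. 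Dividing by the positive quantity $N\log\!\bigl(\tfrac{\lambda}{\mu}\gamma^{d(H)}\bigr)$ (positivity already established) yields the log-ratio inequality, completing necessity.

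The main obstacle I anticipate is purely bookkeeping: confirming that the same $H$ can witness both conditions, which is what rules out a spurious case-split parallel to Theorem~\ref{theorem3}. The natural choice $H = H(\mathbf{x}^*)$ does this cleanly in the necessity direction, and the sign of $\log(\tfrac{\lambda}{\mu}\gamma^{d(H)})$, which is guaranteed positive by the first condition, is what licenses the cross-multiplication in the sufficiency direction. No new inequalities beyond those already used in Theorems~\ref{theorem2} and~\ref{theorem3} are needed.
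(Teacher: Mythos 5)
Your proof is correct, and it follows the paper's high-level decomposition---reduce non-degeneracy to the conjunction of $\mathbf{x}^* \neq \mathbf{x}^0$ (Theorem~\ref{theorem2}) and $\mathbf{x}^* \neq \mathbf{x}^N$ (Theorem~\ref{theorem3})---but it is more careful than the paper's own proof in one substantive respect. The corollary asserts the existence of a \emph{single} subgraph $H$ satisfying both $\lambda\gamma^{d(H)} > \mu$ and the log-ratio inequality, whereas naively conjoining the two theorems (which is essentially all the paper's two-sentence proof does) only yields two possibly different witnesses. You close this gap explicitly: in the necessity direction you take $H = H(\mathbf{x}^*)$ and verify that it witnesses both conditions via $\pi(\mathbf{x}^*) > \pi(\mathbf{x}^0)$ and $\pi(\mathbf{x}^*) > \pi(\mathbf{x}^N)$, and in the sufficiency direction you observe that the first condition supplies the positivity of $\log(\frac{\lambda}{\mu}\gamma^{d(H)})$ needed to cross-multiply the second without reversing it. A further small difference is that you bypass the case split of Theorem~\ref{theorem3} on whether $\overline{H} = G$ by comparing $\pi(\mathbf{x}')$ and $\pi(\mathbf{x}^N)$ directly from \eqref{eq:equilibriumdistribution2}; your parenthetical correctly notes that when $\overline{H} = G$ the two conditions cannot be simultaneously satisfied (the ratio is at least $1 \geq N'/N$), so the hypothesis is vacuous there and the implication still holds. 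The paper's proof buys brevity by leaning on the already-proved theorems; yours buys a complete, self-contained argument that actually establishes the shared-witness form in which the corollary is stated.
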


\begin{proof}

Theorem~\ref{theorem2} gives the necessary and sufficient condition for the most-probable configuration $\mathbf{x}^* \neq \mathbf{x}^0$ to be existence of a subgraph $H$ such that $\lambda\gamma^{d(H)} > \mu$. Theorem~\ref{theorem3} gives the necessary and sufficient condition that the most-probable configuration is not $\mathbf{x}^N$ when 
\[
\frac{\log(\frac{\lambda}{\mu}\gamma^{d(G)})}{\log(\frac{\lambda}{\mu}\gamma^{d(H)} )} < \frac{N'}{N}.
\]
This proves the Corollary.
\end{proof}

\section{Proof for Corollary~\ref{corollary2}}\label{proofcorollary2}

\begin{corollary*}
The most-probable configuration $\mathbf{x}^* = \mathbf{x}^0$ if and only if $\lambda \gamma^{d(\overline{H})} \le \mu$.
\end{corollary*}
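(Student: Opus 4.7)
The plan is to deduce Corollary~\ref{corollary2} as a direct consequence of Theorem~\ref{theorem2}, by taking the contrapositive and then using the defining property of the densest subgraph $\overline{H}$.

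First I would write the contrapositive of Theorem~\ref{theorem2}: the statement $\mathbf{x}^* = \mathbf{x}^0$ holds if and only if there is \emph{no} induced subgraph $H \in \mathcal{H}$ with $\lambda \gamma^{d(H)} > \mu$, equivalently, if and only if $\lambda \gamma^{d(H)} \le \mu$ for every $H \in \mathcal{H}$. So the goal reduces to showing that this universal statement over all induced subgraphs collapses to the single condition on $\overline{H}$.

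Next I would exploit the monotonicity of the exponential. Under Assumption~1 we are in Regime II, so $\gamma > 1$, which means $x \mapsto \gamma^x$ is strictly increasing. By Definition~\ref{def:densest}, $d(\overline{H}) \ge d(H)$ for every $H \in \mathcal{H}$, and therefore $\lambda \gamma^{d(H)} \le \lambda \gamma^{d(\overline{H})}$ for every $H \in \mathcal{H}$. This immediately gives one direction: if $\lambda \gamma^{d(\overline{H})} \le \mu$, then $\lambda \gamma^{d(H)} \le \mu$ for every $H$, and the contrapositive of Theorem~\ref{theorem2} yields $\mathbf{x}^* = \mathbf{x}^0$. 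For the other direction, note that $\overline{H} \in \mathcal{H}$, so the universal bound $\lambda \gamma^{d(H)} \le \mu$ for every $H$ specializes at $H = \overline{H}$ to $\lambda \gamma^{d(\overline{H})} \le \mu$.

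There is no real obstacle here beyond making sure to invoke $\gamma > 1$ (so that larger density gives larger $\gamma^{d(H)}$) and citing Theorem~\ref{theorem2} correctly; both directions are essentially one line once the densest subgraph is recognized as the worst case of the universal quantifier. The proof is short enough that I would present it as a two-line chain of equivalences, citing Theorem~\ref{theorem2} once and Definition~\ref{def:densest} once.
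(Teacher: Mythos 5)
Your proof is correct and follows essentially the same route as the paper's: both directions are obtained from the contrapositive of Theorem~\ref{theorem2}, using $\gamma > 1$ together with $d(\overline{H}) \ge d(H)$ for the sufficiency direction and the membership $\overline{H} \in \mathcal{H}$ for the necessity direction. No gaps.
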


\begin{proof}

\textbf{Sufficiency:} If $\lambda \gamma^{d(\overline{H})} \leq \mu$, then $\mathbf{x}^* = \mathbf{x}^0$.

Recall the definition of the densest subgraph \ref{def:densest}. With $\gamma > 1$, $\lambda \gamma^{d(H(\mathbf{x}))} \le\lambda \gamma^{d(\overline{H}(\mathbf{x}))} \le \mu $ for all possible induced subgraphs in $G$. This means that there is no subgraph, $H \in \mathcal{H}$, for which $\lambda \gamma^{d(H)} > \mu$. We can conclude that $\mathbf{x}^* = \mathbf{x}^0$ using the contrapositive of Theorem~\ref{theorem2}: If there is no subgraph $H \in \mathcal{H}$ with density $d(H)$ for which $\lambda\gamma^{d(H)} > \mu$, then $\mathbf{x}^* = \mathbf{x}^0$.

\textbf{Necessity:} If $\mathbf{x}^* = \mathbf{x}^0$, then $\lambda \gamma^{d(\overline{H})} \le \mu$.

The result follows from the contrapositive of Theorem~\ref{theorem2}: If $\mathbf{x}^* = \mathbf{x}^0$, then there is no subgraph $H \in \mathcal{H}$ with density $d(H)$ for which $\lambda\gamma^{d(H)} > \mu$. Therefore, all induced subgraphs, including the densest subgraph have density for which $\lambda\gamma^{d(H)} \le \mu$.

\end{proof}


\section{Proof for Corollary~\ref{corollary4}}\label{proofcorollary4}

\begin{corollary*}
If $G$ is a $k$-regular, complete multipartite, or complete multipartite with $k$-regular islands network, then $\overline{H} = G$. That is, for these structured networks, the densest subgraph is the overall graph.
\end{corollary*}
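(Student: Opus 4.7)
The plan is to argue by contradiction, leveraging the phase-transition classification for structured networks established in \cite{JZhangJournal} together with Corollary~\ref{corollary3}. This is pleasantly non-combinatorial: rather than verifying the density bound directly on each of the three graph families, I will show that the existence of a denser-than-$G$ subgraph would force the scaled SIS process into behaviour that the prior analysis has already ruled out for these networks.

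Suppose, for contradiction, that $d(\overline{H}) > d(G)$, and set $d := d(G)$, $d' := d(\overline{H})$, and $N' := \lvert V(\overline{H}) \rvert$. First I would construct explicit Regime~II parameters that satisfy the hypothesis of Corollary~\ref{corollary3} with witness subgraph $H = \overline{H}$. Pick any $\gamma > 1$ and set $\frac{\lambda}{\mu} := \gamma^{-(d+d')/2}$; since connectedness of $G$ with $N \ge 2$ ensures $d > 0$, this places $\frac{\lambda}{\mu}$ strictly in $(0,1)$, so we remain in Regime~II. A direct computation yields
\[
\frac{\lambda}{\mu}\gamma^{d} \;=\; \gamma^{(d-d')/2} \,<\, 1 \qquad\text{and}\qquad \frac{\lambda}{\mu}\gamma^{d'} \;=\; \gamma^{(d'-d)/2} \,>\, 1,
\]
so on one hand $\lambda\gamma^{d(\overline{H})} > \mu$, and on the other hand the ratio $\log(\frac{\lambda}{\mu}\gamma^{d}) / \log(\frac{\lambda}{\mu}\gamma^{d'})$ is the quotient of a negative number by a positive one, hence strictly less than the positive quantity $N'/N$. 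Both clauses of Corollary~\ref{corollary3} are therefore satisfied by $H = \overline{H}$, and the most-probable configuration at these parameters is a non-degenerate configuration.

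Next I would invoke the characterization from \cite{JZhangJournal}: for each of the three families in the statement, the most-probable configuration throughout Regime~II is shown to be either $\mathbf{x}^0$ or $\mathbf{x}^N$, separated by a threshold curve, with no non-degenerate solutions anywhere in the admissible parameter region. Applied to the $(\frac{\lambda}{\mu},\gamma)$ chosen above, this contradicts the non-degeneracy just derived. Hence no induced subgraph can be strictly denser than $G$, that is $d(\overline{H}) = d(G)$, and since $G$ itself lies in $\mathcal{H}$ we may take $\overline{H} = G$.

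The main obstacle is not the parameter construction, which is elementary, but confirming that the phase-transition results imported from \cite{JZhangJournal} are sharp enough to exclude non-degenerate most-probable configurations for \emph{every} pair $(\frac{\lambda}{\mu},\gamma) \in (0,1]\times(1,\infty)$ on all three structured families, not merely at or near the published threshold curve. Once that is confirmed, the three cases collapse to a single uniform line of argument and the corollary follows immediately.
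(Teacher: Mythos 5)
Your proposal is correct and follows essentially the same route as the paper's proof: both combine the classification from \cite{JZhangJournal} (no non-degenerate most-probable configurations for these families anywhere in Regime~II) with Corollary~\ref{corollary3} to conclude that $d(H)\le d(G)$ for every induced subgraph $H$. Your explicit witness $\frac{\lambda}{\mu}=\gamma^{-(d+d')/2}$ in fact makes precise a step the paper's contrapositive argument leaves informal (exhibiting a concrete Regime~II parameter point at which both clauses of Corollary~\ref{corollary3} would hold if a denser-than-$G$ subgraph existed), and the sharpness caveat you flag at the end is exactly the premise the paper asserts and imports from \cite{JZhangJournal}.
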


\begin{proof}
We proved previously in \cite{JZhangJournal} that the solution of the Most-Probable Configuration Problem for any parameters $\left(\frac{\lambda}{\mu}, \gamma \right)$ in Regime II) \textbf{Endogenous Infection Dominant:} $0 < \frac{\lambda}{\mu} \le 1, \gamma > 1$, over $k$-regular, complete multipartite, complete multipartite with $k$-regular islands networks is either $\bf{x}^0$ and/or $\bf{x}^N$; the solution to the Most-Probable Configuration Problem for these networks is not a non-degenerate configuration in Regime II). We will use this and Corollary~\ref{corollary3} to prove this corollary.

Consider the contrapositive of Corollary~\ref{corollary3}: Let the density of the network be $d(G) = \frac{E}{N}$. Then, the most-probable configuration is not a non-degenerate configuration, $\mathbf{x}^* \in \{\mathbf{x}^0, \mathbf{x}^N\}$, if and only if there does not exist any subgraph $H \in \mathcal{H}$ with density $d(H) = \frac{E'}{N'}$ for which $\lambda\gamma^{d(H)} > \mu$, \text{ or}
\[
\frac{\log(\frac{\lambda}{\mu}\gamma^{d(G)})}{\log(\frac{\lambda}{\mu}\gamma^{d(H)} )} < \frac{N'}{N}.
\]

This implies that all the induced subgraphs, $H \in \mathcal{H}$, in networks whose solution to the Most-Probable Configuration Problem is not a non-degenerate configuration in Regime II), satisfy the condition that $\lambda\gamma^{d(H)} \le \mu$ or 
\[
\frac{\log(\frac{\lambda}{\mu}\gamma^{d(G)})}{\log(\frac{\lambda}{\mu}\gamma^{d(H)} )} \geq \frac{N'}{N},
\]
for all $0 < \frac{\lambda}{\mu} \le 1, \gamma > 1$.

Depending on the effective infection rate and the endogenous infection rate, $\left(\frac{\lambda}{\mu}, \gamma \right)$, the first condition $\lambda\gamma^{d(H)} \le \mu$ may not be satisfied. However, since $\frac{N'}{N} \leq 1$ regardless of the parameters and the underlying network, the second condition is satisfied if
\[
\frac{\log(\frac{\lambda}{\mu}\gamma^{d(G)})}{\log(\frac{\lambda}{\mu}\gamma^{d(H)} )} \geq 1, \quad \forall \, H \in \mathcal{H}.
\]
Since $\gamma > 1$, this means that $d(H) \leq d(G)$ for all possible induced subgraph. As this only depend on the structure of the underlying network, we can conclude that $d(H) \leq d(G)$ for networks whose most-probable configuration can only be $\mathbf{x}^0$ and/or $\mathbf{x}^N$.

\end{proof}

\bibliographystyle{apsrev4-1}
\bibliography{./Main_text_incl._figuresNotes}

\begin{thebibliography}{27}%
\makeatletter
\providecommand \@ifxundefined [1]{%
 \@ifx{#1\undefined}
}%
\providecommand \@ifnum [1]{%
 \ifnum #1\expandafter \@firstoftwo
 \else \expandafter \@secondoftwo
 \fi
}%
\providecommand \@ifx [1]{%
 \ifx #1\expandafter \@firstoftwo
 \else \expandafter \@secondoftwo
 \fi
}%
\providecommand \natexlab [1]{#1}%
\providecommand \enquote  [1]{``#1''}%
\providecommand \bibnamefont  [1]{#1}%
\providecommand \bibfnamefont [1]{#1}%
\providecommand \citenamefont [1]{#1}%
\providecommand \href@noop [0]{\@secondoftwo}%
\providecommand \href [0]{\begingroup \@sanitize@url \@href}%
\providecommand \@href[1]{\@@startlink{#1}\@@href}%
\providecommand \@@href[1]{\endgroup#1\@@endlink}%
\providecommand \@sanitize@url [0]{\catcode `\\12\catcode `\$12\catcode
  `\&12\catcode `\#12\catcode `\^12\catcode `\_12\catcode `\%12\relax}%
\providecommand \@@startlink[1]{}%
\providecommand \@@endlink[0]{}%
\providecommand \url  [0]{\begingroup\@sanitize@url \@url }%
\providecommand \@url [1]{\endgroup\@href {#1}{\urlprefix }}%
\providecommand \urlprefix  [0]{URL }%
\providecommand \Eprint [0]{\href }%
\providecommand \doibase [0]{http://dx.doi.org/}%
\providecommand \selectlanguage [0]{\@gobble}%
\providecommand \bibinfo  [0]{\@secondoftwo}%
\providecommand \bibfield  [0]{\@secondoftwo}%
\providecommand \translation [1]{[#1]}%
\providecommand \BibitemOpen [0]{}%
\providecommand \bibitemStop [0]{}%
\providecommand \bibitemNoStop [0]{.\EOS\space}%
\providecommand \EOS [0]{\spacefactor3000\relax}%
\providecommand \BibitemShut  [1]{\csname bibitem#1\endcsname}%
\let\auto@bib@innerbib\@empty
\bibitem [{\citenamefont {Newman}(2010)}]{newman2010networks}%
  \BibitemOpen
  \bibfield  {author} {\bibinfo {author} {\bibfnamefont {M.}~\bibnamefont
  {Newman}},\ }\href@noop {} {\emph {\bibinfo {title} {Networks: an
  Introduction}}}\ (\bibinfo  {publisher} {Oxford University Press},\ \bibinfo
  {year} {2010})\BibitemShut {NoStop}%
\bibitem [{\citenamefont {Csermely}\ \emph {et~al.}(2013)\citenamefont
  {Csermely}, \citenamefont {London}, \citenamefont {Wu},\ and\ \citenamefont
  {Uzzi}}]{csermely2013structure}%
  \BibitemOpen
  \bibfield  {author} {\bibinfo {author} {\bibfnamefont {P.}~\bibnamefont
  {Csermely}}, \bibinfo {author} {\bibfnamefont {A.}~\bibnamefont {London}},
  \bibinfo {author} {\bibfnamefont {L.-Y.}\ \bibnamefont {Wu}}, \ and\ \bibinfo
  {author} {\bibfnamefont {B.}~\bibnamefont {Uzzi}},\ }\href@noop {} {\bibfield
   {journal} {\bibinfo  {journal} {Journal of Complex Networks}\ }\textbf
  {\bibinfo {volume} {1}},\ \bibinfo {pages} {93} (\bibinfo {year}
  {2013})}\BibitemShut {NoStop}%
\bibitem [{\citenamefont {Pastor-Satorras}\ and\ \citenamefont
  {Vespignani}(2002)}]{pastor2002epidemic}%
  \BibitemOpen
  \bibfield  {author} {\bibinfo {author} {\bibfnamefont {R.}~\bibnamefont
  {Pastor-Satorras}}\ and\ \bibinfo {author} {\bibfnamefont {A.}~\bibnamefont
  {Vespignani}},\ }\href@noop {} {\bibfield  {journal} {\bibinfo  {journal}
  {Physical Review E}\ }\textbf {\bibinfo {volume} {65}},\ \bibinfo {pages}
  {035108} (\bibinfo {year} {2002})}\BibitemShut {NoStop}%
\bibitem [{\citenamefont {De~Souza}\ and\ \citenamefont
  {Tom{\'e}}(2010)}]{de2010stochastic}%
  \BibitemOpen
  \bibfield  {author} {\bibinfo {author} {\bibfnamefont {D.~R.}\ \bibnamefont
  {De~Souza}}\ and\ \bibinfo {author} {\bibfnamefont {T.}~\bibnamefont
  {Tom{\'e}}},\ }\href@noop {} {\bibfield  {journal} {\bibinfo  {journal}
  {Physica A: Statistical Mechanics and its Applications}\ }\textbf {\bibinfo
  {volume} {389}},\ \bibinfo {pages} {1142} (\bibinfo {year}
  {2010})}\BibitemShut {NoStop}%
\bibitem [{\citenamefont {Zhang}\ and\ \citenamefont
  {Moura}(2014{\natexlab{a}})}]{JZhangJournal}%
  \BibitemOpen
  \bibfield  {author} {\bibinfo {author} {\bibfnamefont {J.}~\bibnamefont
  {Zhang}}\ and\ \bibinfo {author} {\bibfnamefont {J.~M.~F.}\ \bibnamefont
  {Moura}},\ }\href {\doibase 10.1109/JSTSP.2014.2314858} {\bibfield  {journal}
  {\bibinfo  {journal} {IEEE Journal of Selected Topics in Signal Processing}\
  }\textbf {\bibinfo {volume} {8}},\ \bibinfo {pages} {537} (\bibinfo {year}
  {2014}{\natexlab{a}})}\BibitemShut {NoStop}%
\bibitem [{\citenamefont {Zhang}\ and\ \citenamefont {Moura}(2013)}]{JZhang2}%
  \BibitemOpen
  \bibfield  {author} {\bibinfo {author} {\bibfnamefont {J.}~\bibnamefont
  {Zhang}}\ and\ \bibinfo {author} {\bibfnamefont {J.~M.~F.}\ \bibnamefont
  {Moura}},\ }in\ \href@noop {} {\emph {\bibinfo {booktitle} {Proceedings of
  International Conference on Acoustics, Speech and Signal Processing
  (ICASSP)}}}\ (\bibinfo {organization} {IEEE},\ \bibinfo {year} {2013})\ pp.\
  \bibinfo {pages} {5411--5414}\BibitemShut {NoStop}%
\bibitem [{\citenamefont {Wang}\ \emph {et~al.}(2003)\citenamefont {Wang},
  \citenamefont {Chakrabarti}, \citenamefont {Wang},\ and\ \citenamefont
  {Faloutsos}}]{1238052}%
  \BibitemOpen
  \bibfield  {author} {\bibinfo {author} {\bibfnamefont {Y.}~\bibnamefont
  {Wang}}, \bibinfo {author} {\bibfnamefont {D.}~\bibnamefont {Chakrabarti}},
  \bibinfo {author} {\bibfnamefont {C.}~\bibnamefont {Wang}}, \ and\ \bibinfo
  {author} {\bibfnamefont {C.}~\bibnamefont {Faloutsos}},\ }in\ \href@noop {}
  {\emph {\bibinfo {booktitle} {Proceedings of the International Symposium on
  Reliable Distributed Systems}}}\ (\bibinfo {address} {Florence, Italy},\
  \bibinfo {year} {2003})\ pp.\ \bibinfo {pages} {25--34}\BibitemShut {NoStop}%
\bibitem [{\citenamefont {Weeks}\ \emph {et~al.}(2002)\citenamefont {Weeks},
  \citenamefont {Clair}, \citenamefont {Borgatti}, \citenamefont {Radda},\ and\
  \citenamefont {Schensul}}]{weeks2002social}%
  \BibitemOpen
  \bibfield  {author} {\bibinfo {author} {\bibfnamefont {M.~R.}\ \bibnamefont
  {Weeks}}, \bibinfo {author} {\bibfnamefont {S.}~\bibnamefont {Clair}},
  \bibinfo {author} {\bibfnamefont {S.~P.}\ \bibnamefont {Borgatti}}, \bibinfo
  {author} {\bibfnamefont {K.}~\bibnamefont {Radda}}, \ and\ \bibinfo {author}
  {\bibfnamefont {J.~J.}\ \bibnamefont {Schensul}},\ }\href@noop {} {\bibfield
  {journal} {\bibinfo  {journal} {AIDS and Behavior}\ }\textbf {\bibinfo
  {volume} {6}},\ \bibinfo {pages} {193} (\bibinfo {year} {2002})}\BibitemShut
  {NoStop}%
\bibitem [{\citenamefont {Watts}\ and\ \citenamefont
  {Strogatz}(1998)}]{watts1998collective}%
  \BibitemOpen
  \bibfield  {author} {\bibinfo {author} {\bibfnamefont {D.~J.}\ \bibnamefont
  {Watts}}\ and\ \bibinfo {author} {\bibfnamefont {S.~H.}\ \bibnamefont
  {Strogatz}},\ }\href@noop {} {\bibfield  {journal} {\bibinfo  {journal}
  {Nature}\ }\textbf {\bibinfo {volume} {393}},\ \bibinfo {pages} {440}
  (\bibinfo {year} {1998})}\BibitemShut {NoStop}%
\bibitem [{\citenamefont {C.~Godsil}(2001)}]{algraph}%
  \BibitemOpen
  \bibfield  {author} {\bibinfo {author} {\bibfnamefont {G.~R.}\ \bibnamefont
  {C.~Godsil}},\ }\href@noop {} {\emph {\bibinfo {title} {Algebraic Graph
  Theory}}}\ (\bibinfo  {publisher} {Springer-Verlag},\ \bibinfo {year}
  {2001})\BibitemShut {NoStop}%
\bibitem [{\citenamefont {Draief}\ \emph {et~al.}(2006)\citenamefont {Draief},
  \citenamefont {Ganesh},\ and\ \citenamefont
  {Massouli{\'e}}}]{Draief:2006:TVS:1190095.1190160}%
  \BibitemOpen
  \bibfield  {author} {\bibinfo {author} {\bibfnamefont {M.}~\bibnamefont
  {Draief}}, \bibinfo {author} {\bibfnamefont {A.}~\bibnamefont {Ganesh}}, \
  and\ \bibinfo {author} {\bibfnamefont {L.}~\bibnamefont {Massouli{\'e}}},\
  }in\ \href@noop {} {\emph {\bibinfo {booktitle} {Proceedings of the
  International Conference on Performance Evaluation Methodologies and
  Tools}}}\ (\bibinfo {organization} {ACM},\ \bibinfo {year} {2006})\
  p.~\bibinfo {pages} {51}\BibitemShut {NoStop}%
\bibitem [{\citenamefont {Ganesh}\ \emph {et~al.}(2005)\citenamefont {Ganesh},
  \citenamefont {Massoulie},\ and\ \citenamefont {Towsley}}]{Ganesh}%
  \BibitemOpen
  \bibfield  {author} {\bibinfo {author} {\bibfnamefont {A.}~\bibnamefont
  {Ganesh}}, \bibinfo {author} {\bibfnamefont {L.}~\bibnamefont {Massoulie}}, \
  and\ \bibinfo {author} {\bibfnamefont {D.}~\bibnamefont {Towsley}},\ }in\
  \href@noop {} {\emph {\bibinfo {booktitle} {Proceedings of the Annual Joint
  Conference of the IEEE Computer and Communications Societies}}}\ (\bibinfo
  {address} {Miami, USA},\ \bibinfo {year} {2005})\ pp.\ \bibinfo {pages}
  {1455--1466 vol. 2}\BibitemShut {NoStop}%
\bibitem [{\citenamefont {Norris}(1998)}]{norris1998markov}%
  \BibitemOpen
  \bibfield  {author} {\bibinfo {author} {\bibfnamefont {J.~R.}\ \bibnamefont
  {Norris}},\ }\href@noop {} {\emph {\bibinfo {title} {Markov Chains}}},\
  \bibinfo {number} {2008}\ (\bibinfo  {publisher} {Cambridge University
  Press},\ \bibinfo {year} {1998})\BibitemShut {NoStop}%
\bibitem [{\citenamefont {Kelly}(2011)}]{Kelly}%
  \BibitemOpen
  \bibfield  {author} {\bibinfo {author} {\bibfnamefont {F.~P.}\ \bibnamefont
  {Kelly}},\ }\href@noop {} {\emph {\bibinfo {title} {{Reversibility and
  Stochastic Networks}}}}\ (\bibinfo  {publisher} {Cambridge University
  Press},\ \bibinfo {year} {2011})\BibitemShut {NoStop}%
\bibitem [{\citenamefont {Bonaccorsi}\ \emph {et~al.}(2014)\citenamefont
  {Bonaccorsi}, \citenamefont {Ottaviano}, \citenamefont {De~Pellegrini},
  \citenamefont {Socievole},\ and\ \citenamefont
  {Van~Mieghem}}]{PhysRevE.90.012810}%
  \BibitemOpen
  \bibfield  {author} {\bibinfo {author} {\bibfnamefont {S.}~\bibnamefont
  {Bonaccorsi}}, \bibinfo {author} {\bibfnamefont {S.}~\bibnamefont
  {Ottaviano}}, \bibinfo {author} {\bibfnamefont {F.}~\bibnamefont
  {De~Pellegrini}}, \bibinfo {author} {\bibfnamefont {A.}~\bibnamefont
  {Socievole}}, \ and\ \bibinfo {author} {\bibfnamefont {P.}~\bibnamefont
  {Van~Mieghem}},\ }\href {\doibase 10.1103/PhysRevE.90.012810} {\bibfield
  {journal} {\bibinfo  {journal} {Physics Review E}\ }\textbf {\bibinfo
  {volume} {90}},\ \bibinfo {pages} {012810} (\bibinfo {year}
  {2014})}\BibitemShut {NoStop}%
\bibitem [{\citenamefont {Billionnet}\ and\ \citenamefont
  {Minoux}(1985)}]{billionnet1985maximizing}%
  \BibitemOpen
  \bibfield  {author} {\bibinfo {author} {\bibfnamefont {A.}~\bibnamefont
  {Billionnet}}\ and\ \bibinfo {author} {\bibfnamefont {M.}~\bibnamefont
  {Minoux}},\ }\href@noop {} {\bibfield  {journal} {\bibinfo  {journal}
  {Discrete Applied Mathematics}\ }\textbf {\bibinfo {volume} {12}},\ \bibinfo
  {pages} {1} (\bibinfo {year} {1985})}\BibitemShut {NoStop}%
\bibitem [{\citenamefont {Boros}\ and\ \citenamefont
  {Hammer}(2002)}]{boros2002pseudo}%
  \BibitemOpen
  \bibfield  {author} {\bibinfo {author} {\bibfnamefont {E.}~\bibnamefont
  {Boros}}\ and\ \bibinfo {author} {\bibfnamefont {P.~L.}\ \bibnamefont
  {Hammer}},\ }\href@noop {} {\bibfield  {journal} {\bibinfo  {journal}
  {Discrete Applied Mathematics}\ }\textbf {\bibinfo {volume} {123}},\ \bibinfo
  {pages} {155} (\bibinfo {year} {2002})}\BibitemShut {NoStop}%
\bibitem [{\citenamefont {Gr{\"o}tschel}\ \emph {et~al.}(1981)\citenamefont
  {Gr{\"o}tschel}, \citenamefont {Lov{\'a}sz},\ and\ \citenamefont
  {Schrijver}}]{grotschel1981ellipsoid}%
  \BibitemOpen
  \bibfield  {author} {\bibinfo {author} {\bibfnamefont {M.}~\bibnamefont
  {Gr{\"o}tschel}}, \bibinfo {author} {\bibfnamefont {L.}~\bibnamefont
  {Lov{\'a}sz}}, \ and\ \bibinfo {author} {\bibfnamefont {A.}~\bibnamefont
  {Schrijver}},\ }\href@noop {} {\bibfield  {journal} {\bibinfo  {journal}
  {Combinatorica}\ }\textbf {\bibinfo {volume} {1}},\ \bibinfo {pages} {169}
  (\bibinfo {year} {1981})}\BibitemShut {NoStop}%
\bibitem [{\citenamefont {Lov{\'a}sz}(1983)}]{lovasz1983submodular}%
  \BibitemOpen
  \bibfield  {author} {\bibinfo {author} {\bibfnamefont {L.}~\bibnamefont
  {Lov{\'a}sz}},\ }in\ \href@noop {} {\emph {\bibinfo {booktitle} {Mathematical
  Programming The State of the Art}}}\ (\bibinfo  {publisher} {Springer},\
  \bibinfo {year} {1983})\ pp.\ \bibinfo {pages} {235--257}\BibitemShut
  {NoStop}%
\bibitem [{\citenamefont {Krause}(2010)}]{krause2010sfo}%
  \BibitemOpen
  \bibfield  {author} {\bibinfo {author} {\bibfnamefont {A.}~\bibnamefont
  {Krause}},\ }\href@noop {} {\bibfield  {journal} {\bibinfo  {journal} {The
  Journal of Machine Learning Research}\ }\textbf {\bibinfo {volume} {11}},\
  \bibinfo {pages} {1141} (\bibinfo {year} {2010})}\BibitemShut {NoStop}%
\bibitem [{\citenamefont {Leskovec}\ and\ \citenamefont
  {Krevl}(2014)}]{snapnets}%
  \BibitemOpen
  \bibfield  {author} {\bibinfo {author} {\bibfnamefont {J.}~\bibnamefont
  {Leskovec}}\ and\ \bibinfo {author} {\bibfnamefont {A.}~\bibnamefont
  {Krevl}},\ }\href@noop {} {\enquote {\bibinfo {title} {{SNAP Datasets}:
  {Stanford} large network dataset collection},}\ }\bibinfo {howpublished}
  {\url{http://snap.stanford.edu/data}} (\bibinfo {year} {2014})\BibitemShut
  {NoStop}%
\bibitem [{\citenamefont {Borgatti}(2003)}]{borgatti2003key}%
  \BibitemOpen
  \bibfield  {author} {\bibinfo {author} {\bibfnamefont {S.~P.}\ \bibnamefont
  {Borgatti}},\ }in\ \href@noop {} {\emph {\bibinfo {booktitle} {Dynamic Social
  Network Modeling and Analysis: Workshop Summary and Papers}}}\ (\bibinfo
  {organization} {National Academies Press},\ \bibinfo {year} {2003})\ p.\
  \bibinfo {pages} {241}\BibitemShut {NoStop}%
\bibitem [{\citenamefont {Zhang}\ and\ \citenamefont
  {Moura}(2014{\natexlab{b}})}]{JZhang3}%
  \BibitemOpen
  \bibfield  {author} {\bibinfo {author} {\bibfnamefont {J.}~\bibnamefont
  {Zhang}}\ and\ \bibinfo {author} {\bibfnamefont {J.~M.~F.}\ \bibnamefont
  {Moura}},\ }in\ \href@noop {} {\emph {\bibinfo {booktitle} {Proceedings of
  International Conference on Acoustics, Speech and Signal Processing
  (ICASSP)}}}\ (\bibinfo {organization} {IEEE},\ \bibinfo {year} {2014})\ pp.\
  \bibinfo {pages} {1125--1129}\BibitemShut {NoStop}%
\bibitem [{\citenamefont {Khuller}\ and\ \citenamefont
  {Saha}(2009)}]{Khuller:2009vj}%
  \BibitemOpen
  \bibfield  {author} {\bibinfo {author} {\bibfnamefont {S.}~\bibnamefont
  {Khuller}}\ and\ \bibinfo {author} {\bibfnamefont {B.}~\bibnamefont {Saha}},\
  }in\ \href@noop {} {\emph {\bibinfo {booktitle} {Automata, Languages and
  Programming}}}\ (\bibinfo  {publisher} {Springer},\ \bibinfo {year} {2009})\
  pp.\ \bibinfo {pages} {597--608}\BibitemShut {NoStop}%
\bibitem [{\citenamefont {Wasserman}(1994)}]{wasserman1994social}%
  \BibitemOpen
  \bibfield  {author} {\bibinfo {author} {\bibfnamefont {S.}~\bibnamefont
  {Wasserman}},\ }\href@noop {} {\emph {\bibinfo {title} {Social Network
  Analysis: Methods and Applications}}},\ Vol.~\bibinfo {volume} {8}\ (\bibinfo
   {publisher} {Cambridge university press},\ \bibinfo {year}
  {1994})\BibitemShut {NoStop}%
\bibitem [{\citenamefont {Borgatti}\ and\ \citenamefont
  {Everett}(2000)}]{borgatti2000models}%
  \BibitemOpen
  \bibfield  {author} {\bibinfo {author} {\bibfnamefont {S.~P.}\ \bibnamefont
  {Borgatti}}\ and\ \bibinfo {author} {\bibfnamefont {M.~G.}\ \bibnamefont
  {Everett}},\ }\href@noop {} {\bibfield  {journal} {\bibinfo  {journal}
  {Social Networks}\ }\textbf {\bibinfo {volume} {21}},\ \bibinfo {pages} {375}
  (\bibinfo {year} {2000})}\BibitemShut {NoStop}%
\bibitem [{\citenamefont {Brandes}\ \emph {et~al.}(2013)\citenamefont
  {Brandes}, \citenamefont {Pfeffer},\ and\ \citenamefont
  {Mergel}}]{brandes2013studying}%
  \BibitemOpen
  \bibfield  {author} {\bibinfo {author} {\bibfnamefont {U.}~\bibnamefont
  {Brandes}}, \bibinfo {author} {\bibfnamefont {J.}~\bibnamefont {Pfeffer}}, \
  and\ \bibinfo {author} {\bibfnamefont {I.}~\bibnamefont {Mergel}},\
  }\href@noop {} {\emph {\bibinfo {title} {Studying Social Networks: A Guide to
  Empirical Research}}}\ (\bibinfo  {publisher} {Campus},\ \bibinfo {year}
  {2013})\BibitemShut {NoStop}%
\end{thebibliography}%

\end{document}